\title{Algorithms parameterized by vertex cover and modular width, through potential maximal cliques}
\author{Fedor V. Fomin\inst{1} \and Mathieu Liedloff\inst{2} \and Pedro Montealegre\inst{2} \and Ioan Todinca\inst{2}}
\institute{
Department of Informatics, University of Bergen, N-5020 Bergen, Norway,\\ \texttt{fedor.fomin@ii.uib.no}
\and
Univ. Orl\'{e}ans, INSA Centre Val de Loire, LIFO EA 4022, BP 6759, F-45067 Orl{\'e}ans Cedex 2, France,\\ \texttt{(mathieu.liedloff $\mid$ ioan.todinca $\mid$ pedro.montealegre)@univ-orleans.fr}
}
\date{\today}
\newcommand{\cF}{\mathcal{F}}
\newcommand{\cO}{\mathcal{O}}
\newcommand{\vc}{\operatorname{vc}}
\newcommand{\tw}{\operatorname{tw}}
\newcommand{\cw}{\operatorname{cw}}
\newcommand{\mw}{\operatorname{mw}}
\newcommand{\spmc}{\#\operatorname{pmc}}
\newcommand{\sm}{\setminus}
\newcommand{\pmc}{potential maximal clique}
\newcommand{\cmsot}{\operatorname{CMSO}_2}
\newcommand{\cmsoo}{\operatorname{CMSO}_1}
\newcommand{\msot}{\operatorname{MSO}_2}
\newcommand{\msoo}{\operatorname{MSO}_1}
\newcommand{\goldratio}{1.6181}
\newcommand{\pmcb}{1.7347}
\newcommand{\defproblem}[3]{
  \vspace{1mm}
\noindent\fbox{
  \begin{minipage}{0.96\textwidth}
  \begin{tabular*}{\textwidth}{@{\extracolsep{\fill}}lr} #1   \\ \end{tabular*}
  {\bf{Input:}} #2  \\
  {\bf{Task:}} #3
  \end{minipage}
  }
  \vspace{1mm}
}
\newcommand{\amims}{\textsc{Maximum Induced Subgraph with $\leq \ell$ copies of $\cF_m$-cycles}}
\newcommand{\mislkc}{\textsc{Maximum Induced Subgraph with $\leq \ell$ copies of $p$-cycles}}
\newcommand{\mislf}{{\sc Maximum Ind. Subgraph with   $ \leq  \ell$ copies of Minor Models from $\mathcal{F}$}}
\newcommand{\kip}{{\sc $k$-in-a-Path}}
\newcommand{\kit}{{\sc $k$-in-a-Tree}}
\newcommand{\kic}{{\sc $k$-in-a-Cycle}}
\newcommand{\kig}{{\sc $k$-in-a-Graph From $\mathcal{G}(t,\varphi)$}}
\newcommand{\igp}{{\sc  Independent $\mathcal{G}(t,\varphi)$-Packing}}
\newcommand{\fd}{{\sc Minimum $\mathcal{F}$-Deletion}}
\def\imod#1{\allowbreak\mkern10mu({\operator@font mod}\,\,#1)}
\newcommand{\msphit}{\textsc{Max\- Induced\- Subgraph\- of\- $\tw\leq t$\- satisfiying\- $\varphi$}}
\newcommand{\npmc}{4} 
\begin{document}
\maketitle

\begin{abstract}
In this paper we give upper bounds on the number of \emph{minimal separators} and \emph{potential maximal cliques of graphs} w.r.t. two graph parameters, namely \emph{vertex cover} ($\vc$) and \emph{modular width} ($\mw$). We prove that for any graph, the number of minimal separators is $\cO^*(3^{\vc})$ and $\cO^*(\goldratio^{\mw})$, and the number of \pmc s is $\cO^*(4^{\vc})$ and $\cO^*(\pmcb^{\mw})$, and these objects can be listed within the same running times. (The $\cO^*$ notation suppresses polynomial factors in the size of the input.) Combined with known results~\cite{BoTo01,FoToVi14}, we deduce that a large family of problems, e.g., \textsc{Treewidth}, \textsc{Minimum Fill-in}, \textsc{Longest Induced Path}, \textsc{Feedback vertex set} and many others, can be solved in time $\cO^*(4^{\vc})$ or $\cO^*(\pmcb^{\mw})$. \end{abstract}

\section{Introduction}

The \emph{vertex cover} of a graph $G$, denoted by $\vc(G)$, is the minimum number of vertices that cover all edges of the graph.
The \emph{modular width} $\mw(G)$ can be defined as the maximum degree of a prime node in the modular decomposition of $G$ (see~\cite{TCHP08}  and Section~\ref{se:mw} for definitions).
The main results of this paper are of combinatorial nature: we show that the number of \emph{minimal separators} and the number of \emph{potential maximal cliques} of a graph are upper bounded by a function in each of these parameters. More specifically, we prove the number of minimal separators is at most $3^{vc}$ and $\cO^*(\goldratio^{\mw})$, and the number of potential maximal cliques is $\cO^*(4^{vc})$ and $\cO^*(\pmcb^{\mw})$, and these objects can be listed within the same running time bounds. Recall that the $\cO^*$ notation suppresses polynomial factors in the size of the input, i.e., $\cO^*(f(k))$ should be read as $f(k)\cdot n^{\cO(1)}$ where $n$ is the number of vertices of the input graph. Minimal separators and potential maximal cliques have been used for solving several classical optimization problems, e.g., \textsc{Treewidth}, \textsc{Minimum Fill-In}~\cite{FKTV08}, \textsc{Longest Induced Path}, \textsc{Feedback Vertex Set} or \textsc{Independent Cycle Packing}~\cite{FoToVi14}. Pipelined with our combinatorial bounds, we obtain a series of algorithmic consequences in the area of FPT algorithms parameterized by the vertex cover and the modular  width of the input graph. In particular, the problems mentioned above can be solved in time $\cO^*(4^{vc})$ and $\cO^*(\pmcb^{\mw})$. These results are complementary in the sense that graphs with small vertex cover are sparse, while graphs with small modular width may be dense.

Vertex cover and modular width are strongly related to treewidth ($\tw$) and cliquewidth ($\cw$) parameters, since for any graph $G$ we have $\tw(G) \leq \vc(G)$ and $\cw(G) \leq \mw(G)+2$. 
%
The celebrated theorem of Courcelle'~\cite{Courcelle90} states that 
all problems expressible in Counting Monadic Second Order Logic ($\cmsot$) can be solved in time $f(\tw) \cdot n$ for some function $f$ depending on the problem. A similar result for cliquewidth~\cite{CMR00} shows that all $\cmsoo$ problems can be solved in time $f(\cw) \cdot n$, if the clique-decomposition is also given as part of the input. (See the Appendix~\ref{ap:logic} for definitions of different types of logic. Informally, $\cmsot$ allows logic formulae with quantifiers over vertices, edges, edge sets and vertex sets, and counting modulo constants. The $\cmsoo$ formulae are more restricted, we are not allowed to quantify over edge sets.)

Typically function $f$ is a tower of exponentials, and the height of the tower depends on the formula. Moreover Frick and Grohe~\cite{FrGr04} proved that this dependency on treewidth or cliquewidth cannot be significantly improved in general. 
Lampis~\cite{Lampis12} shows that the running time for $\cmsot$ problems can be improved $2^{2^{\cO(\vc)}} \cdot n$ when parametrized by vertex cover, but he also shows that this cannot be improved to $\cO^*(2^{2^{o(\vc)}})$ (under the exponential time hypothesis). We are not aware of similar improvements for parameter modular width, but we refer to~\cite{GLO13} for discussions on problems parameterized by modular width. 

Most of our algorithmic applications concern a restricted, though still large subset of $\cmsot$ problems, but we guarantee algorithms that are single exponential in the vertex cover: $\cO^*(4^{vc})$ and in the modular width: $\cO^*(\pmcb^{\mw})$. We point out that our result for modular width extends the result of~\cite{FoVi10,FoToVi14}, who show a similar bound of $\cO^*(\pmcb^n)$ for the number of \pmc s and for the running times for these problems, but parameterized by the number of vertices of the input graph. 

We use the following  generic problem proposed by~\cite{FoToVi14}, that encompasses many classical optimization problems.
Fix an integer $t\geq 0$ and a $\cmsot$ formula $\varphi$. 
Consider the problem of finding, in the input graph $G$, an induced subgraph $G[F]$ together with a vertex subset $X \subseteq F$, such that the treewidth of $G[F]$ is at most $t$, the graph $G[F]$ together with the vertex subset $X$ satisfy formula $\varphi$, and $X$ is of maximum size under this conditions. This optimization problem is called \msphit:

\vspace{-0.2cm}
\begin{equation}\label{eq:opt_phi} 
\begin{array}{ll}
\mbox{Max}  &  |X|   \\
\mbox{subject to } &  \mbox{There is a set }   F\subseteq V    \mbox{ such that }  X\subseteq F;       \\
 &  \mbox{The treewidth of   }  G[F]    \mbox{ is at most }  t ;     \\
 &  (G[F],X)\models\varphi.  
\end{array}
\end{equation}
\vspace{-0.2cm}

Note that our formula $\varphi$ has a free variable corresponding to the vertex subset $X$. 
 For several examples, in  formula $\varphi$ the vertex set $X$ is actually equal to $F$. E.g., even when $\varphi$ only states that $X = F$, for $t=0$ we obtain the \textsc{Maximum Independent set problem}, and for $t=1$ we obtain the \textsc{Maximum Induced Forest}. If $t=1$ and  $\varphi$ states that $X=F$ and $G[F]$ is a path we obtain the \textsc{Longest Induced Path} problem. Still under the assumption that $X=F$, we can express the problem of finding the largest induced subgraph $G[F]$ excluding a fixed planar graph $H$ as a minor, or the largest induced subgraph with no cycles of length $0 \mod l$. But $X$ can correspond to other parameters, e.g. we can choose the formula $\varphi$ such that $|X|$ is the number of connected components of $G[F]$. Based on this we can express problems like \textsc{Independent Cycle Packing}, where the goal is to find an induced subgraph with a maximum number of components, and such that each component induces a cycle. 

The result of~\cite{FoToVi14} states that problem \msphit\ can be solved in a running time of the type $\spmc \cdot n^{t+4} \cdot f(\varphi,t)$ where $\spmc$ is the number of potential maximal cliques of the graph, assuming that the set of all potential maximal cliques is also part of the input. 
%
Thanks to our combinatorial bounds we deduce that the problem \msphit\  can be solved in time $\cO(\npmc^{\vc}n^{t+c})$ and  $\cO(\pmcb^{\mw}n^{t+c})$, for some small constant $c$.

There are several other graph parameters that can be computed in time $\cO^*(\spmc)$ if the input graph is given together with the set of its potential maximal cliques. E.g.,\textsc{Treewidth}, \textsc{Minimum Fill-in}~\cite{FKTV08}, their weighted versions~\cite{BoFo05,Gysel13} and several problems related to phylogeny~\cite{Gysel13},  or  \textsc{Treelength}~\cite{Lokshtanov10}. Pipelined with our main combinatorial result, we deduce that all these problems can be solved in time $\cO^*(\npmc^{\vc})$ or $\cO^*(\pmcb^{\mw})$. Recently Chapelle et al.~\cite{CLTV13} provided an algorithm solving \textsc{Treewidth} and \textsc{Pathwidth} in  $\cO^*(3^{\vc})$, 
but those completely different techniques do not seem to work for \textsc{Minimum Fill-in} or \textsc{Treelength}. The interested reader may also refer., e.g., to~\cite{CLP+14,FLM+08} for more (layout) problems parameterized by vertex cover.


\section{Minimal separators and \pmc s}\label{se:prelim}

Let $G=(V,E)$ be an undirected, simple graph. We denote by $n$ its number of vertices and by $m$ its number of edges. The \emph{neighborhood} of a vertex $v$ is
$N(v)=\{u\in V:~\{u,v\}\in E\}$. We say that a vertex $x$ \emph{sees} a vertex subset $S$ (or vice-versa) if $N(x)$ intersects $S$.
For a vertex set $S\subseteq V$ we denote by $N(S)$ the set $\bigcup_{v \in S} N(v)\setminus S$. We write $N[S]$ (resp. $N[x]$) for $N(S) \cup S$ (resp. $N(x) \cup \{x\}$). 
Also $G[S]$ denotes the subgraph of $G$ induced by $S$, and $G - S$ is the graph $G[V \setminus S]$.

A \emph{connected component} of graph $G$ is the vertex set of a maximal induced connected subgraph of $G$. Consider a vertex subset $S$ of graph $G$. Given two vertices $u$ and $v$, we say that $S$ is a $u,v$-separator if $u$ and $v$ are in different connected components of $G - S$. Moreover, if $S$ is inclusion-minimal among all $u,v$-separators, we say that $S$ is a \emph{minimal $u,v$-separator}. A vertex subset $S$ is called a \emph{minimal separator} of $G$ if $S$ is a $u,v$-minimal separator for some pair of vertices $u$ and $v$.

 Let $C$ be a component of $G - S$. If $N(C) = S$, we say that $C$ is a \emph{full component} associated to $S$. 

\begin{proposition}[folklore]\label{pr:full}
A vertex subset $S$ of $G$ is a minimal separator if $G - S$ has at least two full components associated to $S$. Moreover, $S$ is a minimal minimal $x,y$-separator if and only if $x$ and $y$ are in different full components associated to $S$. 
\end{proposition}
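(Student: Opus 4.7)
The plan is to prove both the "if" direction of the first statement and the full equivalence of the second statement simultaneously, since they share the same underlying arguments. I would organize the proof as two symmetric implications about the refined statement "$S$ is a minimal $x,y$-separator iff $x$ and $y$ lie in different full components of $G - S$", and then observe that the first statement follows by picking arbitrary representatives in two full components.

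For the backward direction, I would assume $x$ and $y$ lie in distinct full components $C_x$ and $C_y$ of $G - S$. Obviously $S$ is an $x,y$-separator. To prove minimality, I would pick any $v \in S$ and show that $S \setminus \{v\}$ no longer separates $x$ from $y$. Since $N(C_x) = S$ and $N(C_y) = S$, the vertex $v$ has at least one neighbor $u_x \in C_x$ and at least one neighbor $u_y \in C_y$. In $G - (S \setminus \{v\})$ the path $x \leadsto u_x - v - u_y \leadsto y$ (using internal connectivity of $C_x$ and $C_y$) witnesses that $x$ and $y$ are in the same component. Hence no proper subset of $S$ separates $x$ from $y$, so $S$ is a minimal $x,y$-separator.

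For the forward direction, assume $S$ is a minimal $x,y$-separator and let $C_x$, $C_y$ be the components of $G - S$ containing $x$ and $y$ respectively. I need to show both are full. Suppose, for contradiction, that some $v \in S$ has no neighbor in $C_x$. Consider $G - (S \setminus \{v\})$: the only vertex reintroduced compared to $G - S$ is $v$, and its neighborhood in $V \setminus S$ is disjoint from $C_x$ by assumption. Consequently $C_x$ remains a connected component of $G - (S \setminus \{v\})$, and in particular it still does not contain $y$. Thus $S \setminus \{v\}$ is still an $x,y$-separator, contradicting the minimality of $S$. Hence every $v \in S$ has a neighbor in $C_x$, i.e.\ $N(C_x) = S$. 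By symmetry $N(C_y) = S$, so $x$ and $y$ lie in distinct full components.

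The first statement of the proposition is now immediate: if $G - S$ has two full components $C_1, C_2$, take any $x \in C_1$ and $y \in C_2$; the backward direction just proved shows that $S$ is a minimal $x,y$-separator, hence a minimal separator. The only step that requires any care is the forward direction, where one must carefully observe that reintroducing a single vertex $v \notin N(C_x)$ cannot merge $C_x$ with any other component; everything else is a direct unpacking of definitions.
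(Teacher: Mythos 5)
Your proof is correct. The paper states this proposition as folklore and gives no proof of its own, so there is nothing to compare against; your argument is the standard one. Both directions are handled properly: in the backward direction you correctly use fullness of both components to route an $x$--$y$ path through any removed vertex $v$ (and the reduction from arbitrary proper subsets to sets of the form $S \setminus \{v\}$ is the usual monotonicity of separation), and in the forward direction you correctly observe that reinserting a vertex $v$ with no neighbor in $C_x$ leaves $C_x$ as a component of $G - (S \setminus \{v\})$, contradicting minimality. The derivation of the first statement from the second is also fine.
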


%

A graph $H$ is \emph{chordal} or \emph{triangulated} if every cycle with four or more vertices has a chord, i.e., an edge between two non-consecutive vertices of the cycle.
A {\em triangulation} of
a graph $G=(V,E)$ is a chordal graph $H = (V, E')$ such that $E
\subseteq E'$. Graph $H$ is a {\em minimal triangulation} of $G$ if
 for every
edge set $E''$ with $E \subseteq E'' \subset E'$, the
graph $F=(V, E'')$ is not chordal.



A set of vertices $\Omega \subseteq V$ of a graph $G$ is called a
{\em potential maximal clique} if there is a minimal triangulation
$H$ of $G$ such that $\Omega$ is a maximal clique of $H$. 


The following statement due to Bouchitt{\'e} and Todinca~\cite{BoTo01} provides a characterization of potential maximal cliques, and in particular allows to test in polynomial time if a vertex subset $\Omega$ is a potential maximal clique of $G$:

\begin{proposition}[\cite{BoTo01}]\label{pr:pmc_sep}
Let $\Omega \subseteq V$ be a set of vertices of the graph $G=(V,E)$ and
  $ \{ C_1, \ldots, C_p\}$ be the set of 
connected components of $G - \Omega$. We denote  ${\mathcal S}(\Omega) = \{
S_1,S_2, \ldots  , S_p\}$, where $S_i = N(C_i)$ for all $i \in \{1,\dots, p\}$. Then
$\Omega$ is a potential maximal clique of $G$ if and only if
\begin{enumerate}
\item each $S_i \in {\mathcal{S}}(\Omega)$ is strictly contained in $\Omega$;
\item the graph on the vertex set $\Omega$ obtained from $G[\Omega]$ by
completing each $S_i \in {\mathcal{S}}(\Omega)$ into a clique is a
complete graph.
\end{enumerate}
Moreover, if $\Omega$ is a potential maximal clique, then
$\mathcal S(\Omega)$ is  the set of minimal separators of $G$ contained
in $\Omega$.
\end{proposition}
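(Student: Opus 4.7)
My plan is to prove the two directions of the equivalence separately and to derive the moreover statement at the end. I rely on two classical structural facts about a minimal triangulation $H$ of $G$: every minimal separator of $H$ is also a minimal separator of $G$ and forms a clique in $H$; and every edge of $H \setminus G$ (a \emph{fill edge}) lies inside some minimal separator of $H$. From these one further deduces that whenever $\Omega$ is a maximal clique of the minimal triangulation $H$, the connected components of $G - \Omega$ and $H - \Omega$ coincide.

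For the forward direction, fix a minimal triangulation $H$ in which $\Omega$ is a maximal clique. Using the coincidence of components, each $S_i = N_G(C_i) = N_H(C_i)$ is a minimal separator of $H$, hence of $G$. The strict inclusion $S_i \subsetneq \Omega$ is forced by the maximality of $\Omega$ in $H$, since otherwise $\Omega \cup \{v\}$ would be a clique in $H$ for any $v \in C_i$; this yields condition~(1). For condition~(2), any non-edge of $G[\Omega]$ is a fill edge of $H$, and hence its endpoints lie together in some minimal separator $T$ of $H$. A case analysis on whether $T \subseteq \Omega$ or not, together with the clique structure of $T$ in $H$ and the coincidence of components, shows that the two endpoints of the fill edge are together in some $S_i$. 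Hence completing the $S_i$'s to cliques inside $G[\Omega]$ recovers $H[\Omega]$, which is complete.

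For the backward direction, I construct $H$ explicitly: take $G$, add every missing edge inside $\Omega$, and for each $i$ replace the subgraph induced by $C_i \cup S_i$ by a minimal triangulation $H_i$ of the graph on $C_i \cup S_i$ in which $S_i$ has been completed into a clique. Condition~(1) makes $\Omega$ maximal in $H$, since any $v \in C_i$ has no neighbor in the nonempty set $\Omega \setminus S_i$. Chordality holds because a chordless cycle of length at least four in $H$ cannot contain two consecutive vertices of $\Omega$ (they would form a chord), so it must lie inside some $H_i$, contradicting the chordality of $H_i$. For minimality, any edge added inside $\Omega$ lies in some $S_j$ by condition~(2) and its removal would destroy the chordality of $H_j$ by the minimality of $H_j$; every edge added inside an $H_i$ is necessary by the minimality of $H_i$.

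The moreover clause follows because the forward argument already shows each $S_i$ is a minimal separator of $G$ contained in $\Omega$. For the converse, given a minimal separator $S \subseteq \Omega$ of $G$ with full components $A, B$, condition~(2) prevents both of $A, B$ from being contained in $\Omega$: if $a \in A$ and $b \in B$ were both in $\Omega \setminus S$ and non-adjacent in $G$, then condition~(2) would provide some $S_i$ with $a, b \in S_i$, and the path $a$-$C_i$-$b$ would connect $a$ to $b$ in $G - S$, contradicting the separation. Hence one of the full components, say $B$, contains some component $C_i$ of $G - \Omega$, and the minimality of $S$ together with $S_i \subseteq \Omega$ forces $S = S_i$. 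The step I expect to be most delicate is the case analysis for condition~(2) in the forward direction: pinning down how fill edges of $H$ inside $\Omega$ relate to the $S_i$'s requires careful use of both the clique structure of $\Omega$ in $H$ and the coincidence of components in $G - \Omega$ and $H - \Omega$.
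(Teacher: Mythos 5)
The paper does not prove this proposition at all: it is imported verbatim from Bouchitt\'e and Todinca~\cite{BoTo01}, so there is no in-paper argument to compare yours against. Judged on its own terms, your proof has the right architecture (it is essentially the strategy of~\cite{BoTo01}: read the $S_i$ off a minimal triangulation $H$ for necessity, and build $H$ block-by-block over the $C_i\cup S_i$ for sufficiency), but two steps are genuinely broken. First, in the minimality check for the backward direction, you claim that a fill edge $uv$ inside $\Omega$ with $u,v\in S_j$ cannot be removed because ``its removal would destroy the chordality of $H_j$ by the minimality of $H_j$.'' This is false: $uv$ is an edge of the base graph $G[C_j\cup S_j]$ \emph{with $S_j$ completed}, i.e.\ it is not a fill edge of $H_j$ relative to the graph $H_j$ minimally triangulates, so the minimality of $H_j$ says nothing about it, and $H_j-uv$ can perfectly well be chordal. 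Concretely, for $G$ a $4$-cycle $u$-$w$-$v$-$x$-$u$ and $\Omega=\{u,v,w\}$, your $H$ adds only $uv$, the block $H_1$ on $\{u,v,x\}$ is a triangle, and $H_1-uv$ is a chordal path; the chordless cycle certifying that $uv$ is needed is $u$-$w$-$v$-$x$-$u$, which uses the vertex $w\in\Omega\setminus S_1$ lying \emph{outside} $C_1\cup S_1$. The correct argument must combine an induced $u$-$v$ path through $C_j$ with a vertex of $\Omega\setminus S_j$ (nonempty by condition~1, and non-adjacent to $C_j$), which is precisely the ingredient your proof never uses.

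Second, the forward direction of condition~2 --- which you yourself flag as the delicate step --- is not actually carried out: ``a case analysis\dots shows'' is the entire content of the theorem at that point. Note also that reducing a fill edge of $H[\Omega]$ to membership in some $S_i=N_G(C_i)$ requires not only that the components of $G-\Omega$ and $H-\Omega$ coincide but that their neighborhoods do ($N_G(C_i)=N_H(C_i)$, i.e.\ no fill edge joins $C_i$ to $\Omega$), an additional fact you neither state nor derive. There are smaller slips of the same flavor: in condition~1 you assert that $S_i=\Omega$ would make $\Omega\cup\{v\}$ a clique ``for any $v\in C_i$,'' whereas only the existence of \emph{some} such $v$ holds and that existence itself needs a chordality (clique-tree) argument; and in the chordality check a chordless cycle may well contain two \emph{consecutive} vertices of $\Omega$ --- it is two \emph{non-consecutive} ones that create a chord --- after which you still must show those $\Omega$-vertices lie in $S_i$ before concluding the cycle lives in $H_i$. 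The ``moreover'' clause and the overall plan are fine, but as written the proof would not survive the two failures above.
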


Another way of stating the second condition is that for any pair of vertices $u,v \in \Omega$, if they are not adjacent in $G$ then 
there is a component $C$ of $G - \Omega$ seeing both $x$ and $y$. 

\begin{figure}[h]
\label{fi:cubewaterm}
\begin{center}
\includegraphics[scale=0.35]{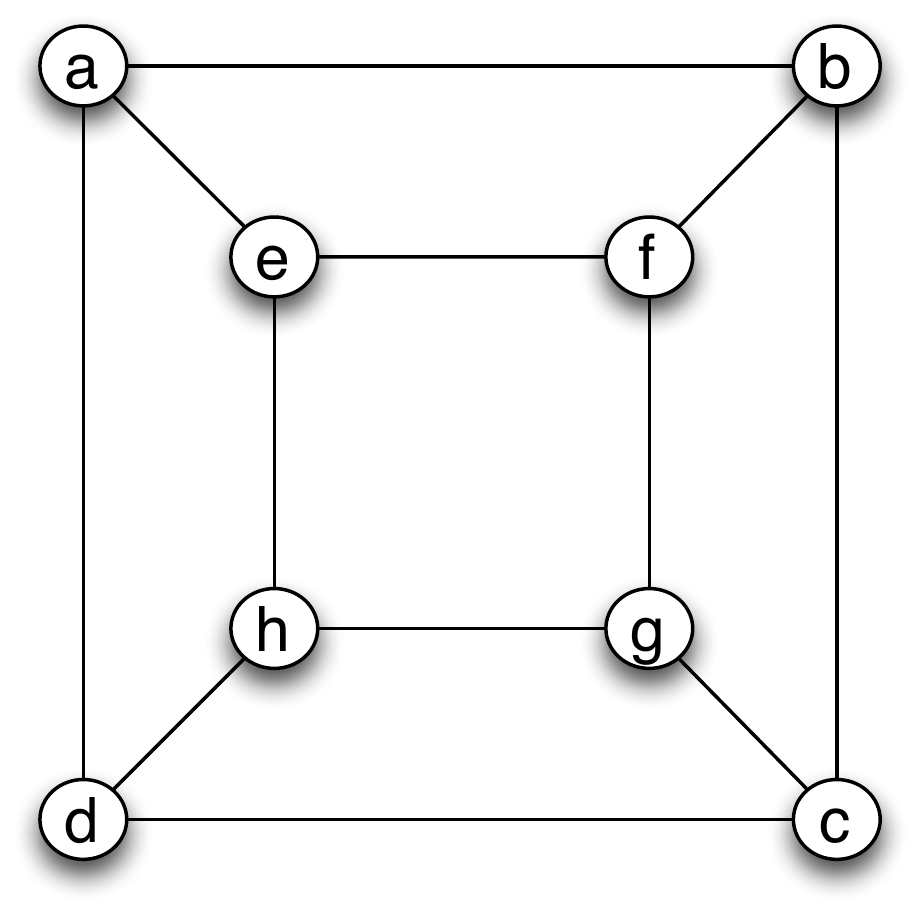}
\includegraphics[scale=0.4]{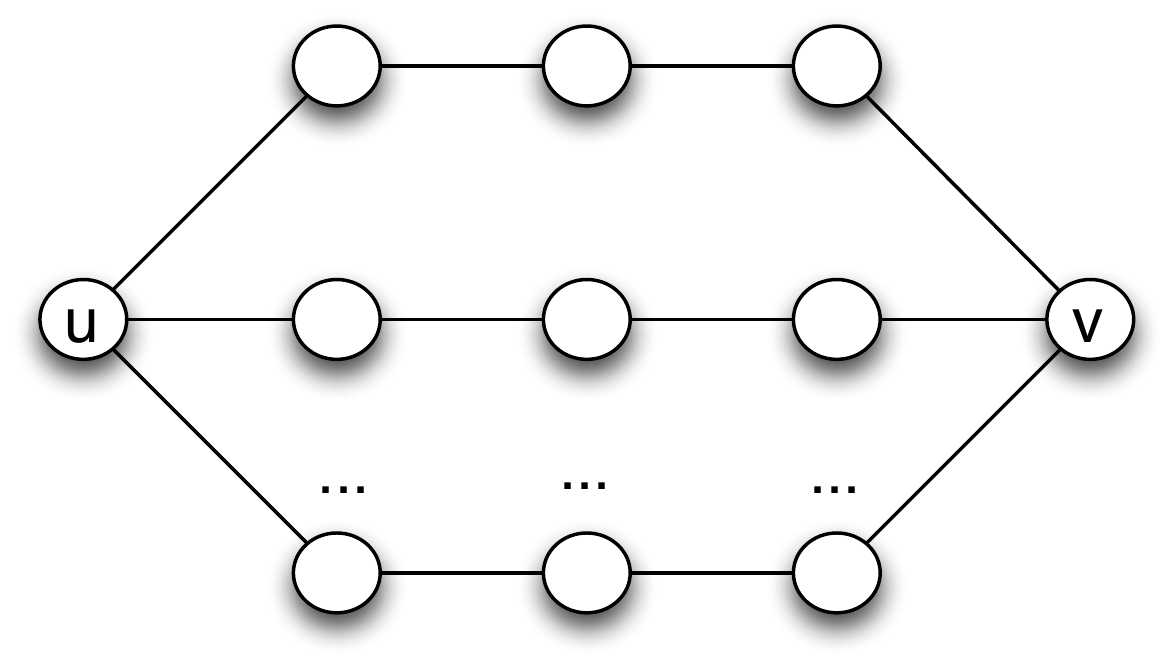}
\end{center}
\caption{Cube graph (left) and watermelon graph (right).}
\end{figure}

To illustrate Proposition~\ref{pr:pmc_sep}, consider, e.g., the cube graph depicted in Figure~\ref{fi:cubewaterm}. The set $\Omega_1 = \{a,e,g,c,h\}$ is a potential maximal clique and the minimal separators contained in $\Omega_1$ are $\{a,e,g,c\}$ and $\{a,h,c\}$. Another \pmc\ of the cube graph is $\Omega_2 = \{a,c,f,h\}$ containing the minimal separators $\{a,c,f\}$, $\{a,c,h\}$, $\{a,f,h\}$ and $\{c,f,h\}$.

Based on Propositions~\ref{pr:full} and~\ref{pr:pmc_sep}, one can easily deduce:
\begin{corollary}[see e.g., \cite{BoTo01}]\label{co:pmc_rec}\label{co:sep_rec}
There is an $O(m)$ time algorithm testing if a given vertex subset $S$ is a minimal separator of $G$, and $O(nm)$ time algorithm testing if a given vertex subset $\Omega$ is a potential maximal clique of $G$. 
\end{corollary}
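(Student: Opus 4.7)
The plan is to translate the structural characterizations of Propositions~\ref{pr:full} and~\ref{pr:pmc_sep} directly into algorithms.

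For the minimal separator test of $S$, I first compute the connected components $C_1, \ldots, C_p$ of $G - S$ by a single BFS/DFS in $O(n+m)$ time, tagging every vertex with its component id. To check whether a component $C_j$ is full, i.e.\ $N(C_j) = S$, I mark the vertices of $S$ once and then scan, for each $v \in C_j$, its adjacency list, recording in a per-component set $T_j$ every neighbor of $v$ that belongs to $S$ (using a fresh/dirty flag on each $S$-vertex to avoid double-counting, reset between components). Across all components every edge is inspected a bounded number of times, so the whole phase costs $O(n+m) = O(m)$ under the standard assumption that $G$ has no isolated vertices. By Proposition~\ref{pr:full}, $S$ is a minimal separator iff $|T_j| = |S|$ for at least two indices $j$, which is read off in $O(p)$ further time.

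For the PMC test of $\Omega$, the components $C_1, \ldots, C_p$ of $G - \Omega$ and the neighborhoods $S_i = N(C_i)$ are computed as above in $O(n+m)$ time, and condition~1 of Proposition~\ref{pr:pmc_sep} reduces to the trivial check $|S_i| < |\Omega|$ for every $i$. The key observation for condition~2 is that after filling each $S_i$ into a clique, two distinct vertices $u, v \in \Omega$ become adjacent iff either $uv \in E$ or some component $C_i$ is adjacent to both, which is itself equivalent to the existence of a $u$--$v$ walk whose internal vertices lie entirely in $V \setminus \Omega$. Accordingly, I would launch from each $u \in \Omega$ a BFS that leaves $u$ only through its non-$\Omega$ neighbors, explores $V \setminus \Omega$ freely, and treats every $\Omega$-vertex it touches as a sink (recorded but not expanded). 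The set of touched $\Omega$-vertices, together with $N(u) \cap \Omega$, is exactly the set of neighbors of $u$ in the completed graph on $\Omega$, so condition~2 holds iff this set equals $\Omega \setminus \{u\}$ for every $u$. Each source costs $O(n+m)$ and there are $|\Omega| \leq n$ sources, yielding the announced $O(nm)$ total.

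I do not expect a substantive obstacle: both algorithms are direct corollaries of the two characterizations. The only subtlety is the per-source exploration in the PMC test, where one must treat $\Omega$ as a set of sinks so that a single BFS from each $u$ remains linear; this is what keeps the overall budget at $O(nm)$ rather than something larger such as $O(n^2 m)$ caused by recomputing reachabilities within $V \setminus \Omega$.
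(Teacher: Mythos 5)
Your proposal is correct and is exactly the intended argument: the paper leaves this corollary as an "easy deduction" from Propositions~\ref{pr:full} and~\ref{pr:pmc_sep} (citing \cite{BoTo01}), and your two algorithms --- one linear-time pass to identify full components of $G-S$, and one $O(n+m)$ sink-BFS per vertex of $\Omega$ to simulate the completion of each $S_i$ into a clique --- are the standard way to realize the stated $O(m)$ and $O(nm)$ bounds. No gaps; the only caveats you flag (isolated vertices for the $O(m)$ claim, treating $\Omega$ as sinks to keep each source linear) are the right ones.
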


We also need the following observation.
\begin{proposition}[\cite{BoTo01}]\label{pr:pmc_comp}
Let $\Omega$ be a \pmc\ of $G$ and let $S \subset \Omega$ be a minimal separator. Then $\Omega \setminus S$ is contained into a unique component $C$ of $G - S$, and moreover 
$C$ is a full component associated to $S$.
\end{proposition}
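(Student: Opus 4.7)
The plan is to exploit the structural characterization of PMCs given by Proposition~\ref{pr:pmc_sep}. Since $S$ is a minimal separator contained in $\Omega$, the ``moreover'' part of that proposition tells us $S$ must be one of the $S_i = N(C_i)$ where $C_1,\dots,C_p$ are the components of $G - \Omega$. In particular, by condition~1 of Proposition~\ref{pr:pmc_sep}, $S \subsetneq \Omega$, so the set $W := \Omega \setminus S$ is nonempty. I would fix an arbitrary $u \in W$, let $C$ denote the component of $G - S$ containing $u$, and then prove two things: that $W \subseteq C$, and that $N(C) = S$. Uniqueness of the component follows because components of $G - S$ are pairwise disjoint and $W \subseteq C$ is nonempty.

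For the inclusion $W \subseteq C$: take any $v \in W$ and apply condition~2 of Proposition~\ref{pr:pmc_sep} to the pair $u,v \in \Omega$. Either $uv \in E(G)$, in which case $v$ lies in the same component of $G - S$ as $u$ (both are outside $S$), or there exists $S_i \in \mathcal{S}(\Omega)$ containing both $u$ and $v$. In the latter case, $u$ and $v$ each have a neighbor in the associated component $C_i$ of $G - \Omega$. Since $C_i$ is connected, is disjoint from $\Omega \supseteq S$, and contains a neighbor of $u$, the whole set $C_i$ lies in $C$; hence $v$'s neighbor in $C_i$ is in $C$, and therefore so is $v$.

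For $N(C) = S$: the containment $N(C) \subseteq S$ is immediate, since $C$ is a component of $G - S$. For the reverse, take any $s \in S \subseteq \Omega$ and again apply condition~2 to the pair $s, u \in \Omega$. Either $su \in E(G)$, giving $s \in N(C)$ directly, or $s$ and $u$ share a common $S_i \in \mathcal{S}(\Omega)$, in which case the same argument as above shows $C_i \subseteq C$, so the neighbor of $s$ in $C_i$ witnesses $s \in N(C)$. Combined with uniqueness, $C$ is a full component associated to $S$.

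The main conceptual step, rather than a real obstacle, is recognizing that condition~2 of Proposition~\ref{pr:pmc_sep} serves as a uniform ``connectivity certificate'' for pairs of vertices of $\Omega$: non-adjacent pairs are joined through a component $C_i$ of $G - \Omega$, and such a component is automatically swallowed by the component $C$ of $G - S$ as soon as it shares a single vertex with $C$. This single observation drives both the ``single component'' and the ``full component'' parts of the statement.
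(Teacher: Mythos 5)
The paper does not prove Proposition~\ref{pr:pmc_comp}; it is imported from~\cite{BoTo01} without proof, so there is no in-paper argument to compare against. Your proof is correct and complete: you correctly reduce everything to Proposition~\ref{pr:pmc_sep}, using condition~1 (together with the ``moreover'' part identifying $S$ with some $S_i$) to get $\Omega \setminus S \neq \emptyset$, and condition~2 as the connectivity certificate. The key observation --- that a component $C_i$ of $G-\Omega$ witnessing a non-edge of $G[\Omega]$ is disjoint from $S$ and hence absorbed into the component $C$ of $G-S$ as soon as it touches it --- is exactly what makes both the ``unique component'' and the ``full component'' claims go through, and this is essentially the standard argument from the Bouchitt\'e--Todinca literature.
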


\section{Relations to vertex cover}\label{se:vc}
A vertex subset $W$ is a \emph{vertex cover} of $G$ if each edge has at least one endpoint in $W$. Note that if $W$ is a vertex cover, that $V \setminus W$ induces an \emph{independent set} in $G$, i.e. $G - W$ contains no edges. We denote by $\vc(G)$ the size of a minimum vertex cover of $G$. The parameter $\vc(G$) is called the \emph{vertex cover number} or simply (by a slight abuse of language) the \emph{vertex cover} of $G$.

\begin{proposition}[folklore]\label{pr:vc_FPT}
There is an algorithm computing the vertex cover of the input graph in time $\cO^*(2^{vc})$.
\end{proposition}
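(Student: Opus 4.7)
The plan is to use the classical bounded search tree algorithm for \textsc{Vertex Cover}, combined with iterative deepening on the parameter.

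First I would describe a decision procedure that, given a graph $G$ and an integer $k \geq 0$, decides whether $\vc(G) \leq k$ (and returns a witness cover) in time $\cO^*(2^k)$. The procedure is recursive: if $G$ has no edges, return the empty set; if $k = 0$ but edges remain, return "no". Otherwise pick any edge $\{u,v\}$. Since any vertex cover must contain at least one endpoint of this edge, branch on the two possibilities: recursively call the procedure on $G - \{u\}$ with budget $k-1$ (adding $u$ to the cover in case of success), and on $G - \{v\}$ with budget $k-1$ (adding $v$). Return "yes" if either branch succeeds.

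Correctness follows from the observation that an optimal vertex cover of $G$ must include $u$ or $v$, and removing that endpoint leaves a subproblem whose optimal cover has size exactly one less. The running time obeys $T(n,k) \leq 2\,T(n,k-1) + \cO(n+m)$, which solves to $T(n,k) = \cO^*(2^k)$, since the search tree has at most $2^k$ leaves and each node performs polynomial work (finding an edge and removing a vertex).

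To produce the minimum vertex cover itself, I would iterate $k = 0, 1, 2, \ldots$ and call the decision procedure on $G$ with parameter $k$ until it succeeds; the first successful value is $\vc(G)$, and the witness returned is an optimum cover. The total time is
\[
\sum_{k=0}^{\vc(G)} \cO^*(2^k) \;=\; \cO^*(2^{\vc(G)}),
\]
as the geometric sum is dominated by its last term. There is no real obstacle here: the argument is standard, the only thing to be careful about is ensuring that when we branch we actually reconstruct the cover (not just decide existence), which is immediate from recording the chosen endpoint along any successful branch.
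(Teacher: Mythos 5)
Your proof is correct and is exactly the standard argument one would give for this folklore result (the paper itself provides no proof, merely citing it as folklore): the two-way branching on an endpoint of an uncovered edge yields a search tree with at most $2^k$ leaves, and iterating $k=0,1,\dots$ until success gives the $\cO^*(2^{\vc})$ bound together with a witness cover. Nothing is missing.
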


Let us show that any graph $G$ has at most $3^{\vc(G)}$ minimal separators.

\begin{lemma}\label{le:sep3part}
Let $G =(V,E)$ be a graph, $W$ be a vertex cover and $S \subseteq V$ be a minimal separator of $G$. Consider a three-partition $(D_1,S,D_2)$ of $V$ such that 
both $D_1$ and $D_2$ are formed by a union of components of $G - S$, and both $D_1$ and $D_2$ contain some full component associated to $S$.
Denote $D_1^W = D_1 \cap W$ and $D_2^W = D_2 \cap W$. 

Then $S \sm W  = \{ x \in V \setminus W \mid N(x) \text{~intersects both~} D_1^W \text{~and~} D_2^W\}$.
\end{lemma}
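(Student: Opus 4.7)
The plan is to establish the set equality by proving both inclusions separately, relying on two basic facts: (i) since $W$ is a vertex cover, every vertex outside $W$ has its entire neighborhood inside $W$; and (ii) since $D_1$ and $D_2$ are unions of connected components of $G - S$, there is no edge in $G$ between $D_1 \setminus S$ and $D_2 \setminus S$.

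For the forward inclusion, I would take $x \in S \setminus W$ and show its neighborhood meets both $D_1^W$ and $D_2^W$. By hypothesis, $D_1$ contains a full component $C_1$ associated to $S$, so $N(C_1) = S$ implies that $x$ has some neighbor $y_1 \in C_1 \subseteq D_1$. Because $x \notin W$ and $W$ is a vertex cover, the edge $\{x,y_1\}$ forces $y_1 \in W$, hence $y_1 \in D_1 \cap W = D_1^W$. The same argument using a full component contained in $D_2$ shows that $N(x)$ also meets $D_2^W$.

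For the backward inclusion, let $x \in V \setminus W$ be such that $N(x)$ intersects both $D_1^W$ and $D_2^W$; I need to deduce $x \in S$. Suppose for contradiction $x \notin S$. Then $x$ belongs to some connected component $C$ of $G - S$, and since $D_1$ and $D_2$ are each a disjoint union of such components and cover $V \setminus S$, the component $C$ lies entirely in one of them — say $C \subseteq D_1$. But any neighbor of $x$ outside $S$ must lie in $C$ (since $x$ has no edges to vertices in other components of $G - S$), hence in $D_1$. This contradicts the existence of a neighbor of $x$ in $D_2^W \subseteq D_2 \setminus S$. Therefore $x \in S$, and since $x \notin W$ by assumption, $x \in S \setminus W$.

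There is no real obstacle: both inclusions are essentially unpackings of the definitions of vertex cover, minimal separator, and the full-component structure guaranteed by the hypothesis on $(D_1, S, D_2)$. The only subtlety worth flagging is that the forward inclusion genuinely uses the assumption that \emph{both} $D_1$ and $D_2$ contain a full component associated to $S$ — without it, a vertex of $S$ might fail to have neighbors on both sides of the partition.
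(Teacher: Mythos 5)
Your proof is correct and follows essentially the same route as the paper's: the forward inclusion uses the full components in $D_1$ and $D_2$ together with $N(x)\subseteq W$ for $x\notin W$, and the backward inclusion derives a contradiction from the fact that the component of $G-S$ containing $x$ lies entirely in one side of the partition. No gaps.
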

\begin{proof}
Let $C_1 \subseteq D_1$ and $C_2 \subseteq D_2$ be two full components associated to $S$.
Let $x \in S \setminus W$. Vertex $x$ must have neighbors both in $C_1$ and $C_2$, hence both in $D_1$ and $D_2$. Since $x \not\in W$ and $W$ is a vertex cover, we have $N(x) \subseteq W$. Consequently $x$ has neighbors both in  $D_1^W$ and $D_2^W$.

Conversely, let $x \in V \setminus W$ s.t. $N(x)$ intersects both $D_1^W$ and $D_2^W$. We prove that $x \in S$. By contradiction, assume that $x \not\in S$, thus $x$ is in some component $C$ of $G - S$. Suppose w.l.o.g. that $C \subseteq D_1$. Since $N(x) \subseteq C \cup N(C)$, we must have $N(x) \subseteq D_1 \cup S$. Thus $N(x)$ cannot intersect $D_2$---a contradiction.
\qed
\end{proof}

\begin{theorem}\label{th:sep_vc}
Any graph $G$ has at most $3^{\vc(G)}$ minimal separators. Moreover the set of its minimal separators can be listed in $\cO^*(3^{vc(G)})$ time.
\end{theorem}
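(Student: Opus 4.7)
The plan is to reduce the enumeration of minimal separators to choosing a tripartition of a minimum vertex cover, which has only $3^{\vc}$ possibilities, and then to reconstruct each candidate separator deterministically via Lemma~\ref{le:sep3part}. First I would compute a minimum vertex cover $W$ of $G$ using Proposition~\ref{pr:vc_FPT}, within the allowed $\cO^*(2^{\vc})$ budget.

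Next, I would iterate over all $3^{|W|}$ ordered tripartitions $(A_1, B, A_2)$ of $W$. For each such tripartition I would construct a candidate separator
\[
S \;=\; B \;\cup\; \bigl\{ x \in V \setminus W \;\mid\; N(x) \cap A_1 \neq \emptyset \text{ and } N(x) \cap A_2 \neq \emptyset\bigr\},
\]
and test in polynomial time (Corollary~\ref{co:sep_rec}) whether $S$ is a minimal separator, outputting $S$ exactly when the test succeeds. To avoid reporting duplicates one can store the separators already found in a dictionary keyed by their vertex sets; this adds only a polynomial factor.

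To verify that every minimal separator of $G$ is produced, fix an arbitrary minimal separator $S$. By Proposition~\ref{pr:full}, $G - S$ has at least two full components $C_1, C_2$ associated to $S$. Arbitrarily group the components of $G - S$ into two unions $D_1$ and $D_2$ with $C_1 \subseteq D_1$ and $C_2 \subseteq D_2$, and set $A_i = D_i \cap W$, $B = S \cap W$. Then $(A_1, B, A_2)$ is a tripartition of $W$, and Lemma~\ref{le:sep3part} guarantees that the candidate set built from $(A_1, B, A_2)$ equals exactly $S$. Hence each minimal separator appears as a candidate for at least one tripartition, giving a bound of $3^{\vc}$ on their number. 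The total time is dominated by $3^{\vc}$ candidate checks, each performed in polynomial time, so the listing runs in $\cO^*(3^{\vc})$.

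The main conceptual step is already encapsulated in Lemma~\ref{le:sep3part}: once one accepts that $S \setminus W$ is determined by $(A_1, A_2)$, the counting and the algorithm are immediate. The only routine obstacle is bookkeeping to avoid reporting the same separator multiple times (different tripartitions grouping the non-full components differently may produce the same $S$), which is handled by a standard deduplication and does not affect the claimed bound.
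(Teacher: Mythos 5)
Your proposal is correct and follows essentially the same route as the paper: compute a minimum vertex cover, enumerate the $3^{|W|}$ tripartitions of $W$, reconstruct each candidate separator via Lemma~\ref{le:sep3part}, and verify it with Corollary~\ref{co:sep_rec}. The explicit remarks about choosing $D_1,D_2$ to each contain a full component and about deduplication are small details the paper leaves implicit, but they do not change the argument.
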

\begin{proof}
Let $W$ be a minimum size vertex cover of $G$. For each three-partition $(D_1^W, S^W, D_2^W)$ of $W$, let $S = S^W \cup \{ x \in V \setminus W \mid N(x) \text{~intersects~} D_1^W \text{~and~} D_2^W\}$. According to Lemma~\ref{le:sep3part}, each minimal separator of $G$ will be generated this way, by an appropriate partition $(D_1^W, S^W, D_2^W)$ of $W$. Thus the number of minimal separators is at most $3^{\vc(G)}$, the number of three-partitions of $W$.

These arguments can be easily turned into an enumeration algorithm, we simply need to compute an optimum vertex cover then test, for each set $S$ generated from a three-partition, if $S$ is indeed a minimal separator. The former part takes $\cO^*(2^{(vc(G)})$ time by Proposition~\ref{pr:vc_FPT}, and the latter takes polynomial time for each set $S$ using Corollary~\ref{co:sep_rec}.
\qed
\end{proof}
 
Observe that the bound of Theorem~\ref{th:sep_vc} is tight up to a constant factor. Indeed consider the watermelon graph $W_{k,3}$  formed by $k$ disjoint paths of three vertices plus two vertices $u$ and $v$ adjacent to the left, respectively right ends of the paths (see Figure~\ref{fi:cubewaterm}). Note that this graph has vertex cover $k+2$ (the minimum vertex cover contains the middle of each path and vertices $u$ and $v$) and it also has $3^{k}$ minimal $u,v$-separators, obtained by choosing arbitrarily one of the three vertices on each of the $k$ paths. 

 
We now extend Theorem~\ref{th:sep_vc} to a similar result on \pmc s. Let us distinguish a particular family of potential maximal cliques, which have \emph{active} separators. They have a particular structure which makes them easier to handle.

\begin{definition}[\cite{BoTo02}]\label{de:active}
Let $\Omega \subseteq V$ be a potential maximal clique of graph $G=(V,E)$, let 
  $ \{ C_1, \ldots, C_p\}$ be the set of 
connected components of $G -  \Omega$ and let $S_i = N(C_i)$, for $1 \leq i \leq p$.

Consider now the graph $G^+$ obtained from $G$ by completing into a clique all minimal separators $S_j$, $2 \leq i \leq p$, such that $S_j \not\subseteq S_1$.

We say that $S_1$ is an \emph{active separator} for $\Omega$ if $\Omega$  is not a clique in this graph $G^+$. A pair of vertices $x,y \in \Omega$ that are not adjacent in $G^+$ is called an \emph{active pair}. Note that, by Proposition~\ref{pr:pmc_sep}, we must have $x,y \in S_1$. 
\end{definition}

The following statement characterizes potential maximal cliques with active separators.

\begin{proposition}\label{pr:pmc_active}
Let $\Omega$ be a potential maximal clique having an active separator $S \subset \Omega$, with an active pair $x,y \in S$. Denote by $C$ the unique component of $G - S$ containing $\Omega \setminus S$. Then $\Omega \setminus S$ is a minimal $x,y$-separator in the graph $G[C \cup \{x,y\}]$. 
\end{proposition}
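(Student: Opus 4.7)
The plan is to set $T := \Omega \setminus S$ and prove two things: (a) $T$ separates $x$ and $y$ in $G[C \cup \{x,y\}]$; (b) the components of $x$ and $y$ in $G[C \cup \{x,y\}] - T$ are both full components associated to $T$. Invoking Proposition~\ref{pr:full} in the subgraph $G[C\cup\{x,y\}]$ then finishes the proof. Two initial facts I will use repeatedly: first, since $x,y$ is an active pair they are non-adjacent in $G$ and belong to $S$ (Definition~\ref{de:active}); second, by Proposition~\ref{pr:pmc_comp}, $T$ sits inside a single full component $C$ of $G-S$ with $C\cap \Omega = T$.

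For part~(a), I would argue by contradiction. A path from $x$ to $y$ inside $G[(C\setminus T)\cup\{x,y\}]$ has all internal vertices in $C\setminus\Omega$, connected in $G-\Omega$, hence contained in a single component $C_j$ of $G-\Omega$ with $C_j\subseteq C$. Then $x,y\in N(C_j)=S_j$. Because $x,y$ form an active pair, every separator $S_j$ of $\mathcal S(\Omega)$ containing both of them must satisfy $S_j\subseteq S$ (otherwise $S_j$ would be completed into a clique in $G^+$ and $xy$ would be an edge of $G^+$). But if $C_j\subseteq C$ and $N(C_j)\subseteq S$, then $C_j$ is itself a union of components of $G-S$ contained in $C$, which forces $C_j = C$; this contradicts $C_j\cap\Omega=\emptyset$ versus $C\cap\Omega=T\neq\emptyset$.

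For part~(b), write $D_x$ and $D_y$ for the components of $x$ and $y$ in $G[(C\setminus T)\cup\{x,y\}]$. The inclusion $N_{G[C\cup\{x,y\}]}(D_x)\subseteq T$ is immediate, so the content is to show every $v\in T$ has a neighbor in $D_x$. If $v$ is adjacent to $x$ in $G$, then $x\in D_x$ is already such a neighbor. Otherwise $v,x\in\Omega$ are non-adjacent in $G$, so by the adjacency criterion following Proposition~\ref{pr:pmc_sep} there exists a component $C'$ of $G-\Omega$ with $v,x\in N(C')$. Since $v\in T\subseteq C$ has a neighbor in $C'$, and $C$ is a component of $G-S$, the connected set $C'$ must lie in $C\setminus\Omega$. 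A neighbor of $x$ in $C'$ then lies in $D_x$, so $C'\cap D_x\neq\emptyset$ and connectedness of $C'$ in $G[C\setminus T]$ gives $C'\subseteq D_x$; thus $v$ has a neighbor in $D_x$. The symmetric argument works for $D_y$, so both $D_x$ and $D_y$ are full components associated to $T$ in $G[C\cup\{x,y\}]$, and Proposition~\ref{pr:full} yields the conclusion.

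The only subtle step is part~(b), case $v\not\sim_G x$: one has to verify that the auxiliary component $C'$ of $G-\Omega$ supplied by the non-adjacency criterion actually sits inside $C$ (not in some other full component of $G-S$) so that it can be merged with $x$ in $G[(C\setminus T)\cup\{x,y\}]$. Everything else is bookkeeping with the definitions of active separator, full component, and Proposition~\ref{pr:pmc_comp}.
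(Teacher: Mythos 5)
Your proof is correct and complete: part~(a) correctly exploits that any $x$--$y$ path avoiding $T$ would yield a component $C_j$ of $G-\Omega$ with $x,y\in S_j\subseteq S$, forcing $C_j=C$, and part~(b) correctly derives fullness of $D_x$ and $D_y$ from the non-adjacency criterion of Proposition~\ref{pr:pmc_sep}, including the subtle verification that the auxiliary component $C'$ lies inside $C$. The paper states this proposition without proof (it is taken from the cited work of Bouchitt\'e and Todinca on active separators), and your argument is the standard one for it, so there is nothing to flag.
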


Again on the cube graph of Figure~\ref{fi:cubewaterm}, for the \pmc\ $\Omega_1 = \{ a,e,g,c,h\}$, both minimal separators are active. E.g., for the minimal separator $S = \{a,e,g,c\}$ the pair $\{e,g\}$ is active. Not all potential maximal cliques have active separators, as illustrated by the potential maximal clique $\Omega_2 = \{a,c,f,h\}$ of the same graph.

Let us first focus on potential maximal cliques having an active separator. 
We give a result similar to Lemma~\ref{le:sep3part}, showing that such a potential maximal clique can be determined by a certain partition of the vertex cover $W$ of $G$.
 
 \begin{lemma}\label{le:pmc4part}
Let $G =(V,E)$ be a graph and $W$ be a vertex cover of $G$. Consider a \pmc\ $\Omega$ of $G$ having an active separator $S \subseteq \Omega$ and an active pair $x,y \in S$.
Let $C$ be the unique connected component of $G - S$ intersecting $\Omega$ and let $D_S$ be the union of all other connected components of $G - S$. 
Denote by $D_x$  the union of components of $G - \Omega$ contained in $C$, seeing $x$, by $D_y$ the union of components of $G - \Omega$ contained in $C$ not seeing $x$. 

Now let $D_S^W = D_S \cap W$, $D_x^W = D_s \cap W$ and $D_y^W = D_y \cap W$.  

Then one of the following holds:
\begin{enumerate}
\item\label{it:1} There is a vertex $t \in \Omega$ such that $\Omega \setminus S = N(t) \cap C$.
\item There is a vertex $t \in \Omega$ such that $\Omega = N[t]$.
\item A vertex $z \not\in W$ is in $\Omega$ if and only if
\begin{enumerate}
\item\label{it:SW} $z$ sees $D_S^W$ and $D_x^W \cup D_y^W$, or
\item\label{it:TW} $z$ does not see $D_S^W$ but is sees  $D_x^W \cup \{x\}$, $D_y^W \cup \{y\}$ and $D_x^W \cup D_y^W$.
\end{enumerate}
\end{enumerate}
\end{lemma}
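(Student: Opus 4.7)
The plan is to establish the iff of case~(3) under the assumption that cases~(1) and~(2) both fail. Everything hinges on one structural claim that I prove first: every component $C_0$ of $G \sm \Omega$ lying in $D_x$ does not see $y$. The argument goes as follows. Since $C_0 \subseteq C$ and $\Omega \sm S \subseteq C$ (Proposition~\ref{pr:pmc_comp}) and $C$ is connected in $G - S$, a path from $C_0$ to $\Omega \sm S$ inside $C$ yields some $u \in N(C_0) \cap (\Omega \sm S)$. If $C_0$ also saw $y$, then by Proposition~\ref{pr:pmc_sep} the set $N(C_0)$ would be a minimal separator containing the active pair $\{x, y\}$, which by Definition~\ref{de:active} forces $N(C_0) \subseteq S$---contradicting $u \notin S$.

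For the forward direction of case~(3), take $z \in \Omega \sm W$ and split on whether $z \in S$ or $z \in \Omega \sm S$. If $z \in S$: since $S$ has at least two full components, namely $C$ and some component sitting in $D_S$ (Proposition~\ref{pr:full}), and $N(z) \subseteq W$, the vertex $z$ sees $D_S^W$ and has a neighbor in $C \cap W$. Either that neighbor lies in $D_x^W \cup D_y^W$, giving~(3a); or $N(z) \cap C \subseteq (\Omega \sm S) \cap W$, and the PMC property forces $\Omega \sm S \subseteq N(z)$ (otherwise a witnessing component for some non-edge $zu$ with $u \in \Omega \sm S$ would lie in $C \sm \Omega$ and give $z$ a neighbor there), so $N(z) \cap C = \Omega \sm S$, which is case~(1) with $t = z$---excluded by assumption. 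If $z \in \Omega \sm S$: then $z \in C$ already rules out $z$-sees-$D_S^W$, so only~(3b) is in play. The first two conjuncts of~(3b) come from the PMC property applied to the non-edges $zx$ and $zy$ when they exist (the key claim rules out the $y$-witness lying in $D_x$, else it would see both $x, y$ and force $z \in S$). If the third conjunct $z$-sees-$D_x^W \cup D_y^W$ also failed, then $z$ would have no neighbor in $C \sm \Omega$, hence $N(z) \subseteq \Omega$, and the PMC property would force $\Omega = N[z]$, i.e.\ case~(2) with $t = z$---again excluded.

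For the reverse direction, take $z \in V \sm W$ with $z \notin \Omega$ and let $C_0$ be its component in $G \sm \Omega$. If $C_0 \subseteq D_S$, then $N(z) \subseteq D_S^W \cup (S \cap W)$, so $z$ sees nothing in $D_x^W \cup D_y^W$, falsifying the last conjunct of both~(3a) and~(3b). If $C_0 \subseteq C$, then $z$ has no neighbor in $D_S$, falsifying~(3a); and $C_0$ lies entirely in $D_x$ or entirely in $D_y$. If $C_0 \subseteq D_x$, the key claim gives $y \notin N(C_0)$, so $z$ is non-adjacent to $y$ and has no neighbor in $D_y$, falsifying the $D_y^W \cup \{y\}$ conjunct of~(3b); the case $C_0 \subseteq D_y$ is symmetric, using directly the definition of $D_y$ as components not seeing $x$.

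The main obstacle is the structural claim, which both motivates the asymmetric definition of $D_x$ versus $D_y$ and is precisely what keeps the two directions of the iff consistent; without it, a vertex of $D_x$ that happened to be adjacent to $y$ could satisfy~(3b) without belonging to $\Omega$, breaking the characterization.
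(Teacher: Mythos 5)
Your proof is correct, and its skeleton coincides with the paper's: the same case split ($z\in S$ versus $z\in\Omega\setminus S$ for the forward direction, and $C_0\subseteq D_S$, $D_x$ or $D_y$ for the converse), and the same identification of the degenerate cases~(1) and~(2) as the two ways the characterization can fail for a particular $z$. The supporting machinery, however, is genuinely different. The paper proves both directions by invoking Lemma~\ref{le:sep3part} twice: once for the separator $S$ with the partition $(D_S,S,C)$ of $G$, and once for the separator $T=\Omega\setminus S$ with the partition $(D_x\cup\{x\},T,D_y\cup\{y\})$ of the auxiliary graph $H=G[C\cup\{x,y\}]$, the second application being licensed by Proposition~\ref{pr:pmc_active} together with the observation that $D_x\cup\{x\}$ is exactly the component of $H-T$ containing $x$. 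You bypass both Lemma~\ref{le:sep3part} and Proposition~\ref{pr:pmc_active} and argue directly from the characterization of \pmc s (Proposition~\ref{pr:pmc_sep}) and from Definition~\ref{de:active}; in particular you isolate, as an explicit standalone claim, the fact that no component of $G-\Omega$ inside $C$ sees both $x$ and $y$ (your path argument showing $N(C_0)\not\subseteq S$ plus the completion in $G^+$ is sound, and correctly uses that $N(C_0)$ is one of the separators completed in $G^+$). In the paper this fact is implicit in the assertion that $D_x\cup\{x\}$ is a single component of $H-T$ not containing $y$. What your route buys is self-containment --- it needs neither Proposition~\ref{pr:pmc_active} nor the auxiliary graph $H$ --- and it makes visible exactly where the active-pair hypothesis is used. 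What the paper's route buys is brevity given the tools already in place, and an explicit structural parallel with Lemma~\ref{le:sep3part} (both $S$ and $T$ are recovered as ``vertices seeing both sides of a partition of the vertex cover''), which is precisely the form needed to drive the enumeration in Theorem~\ref{th:pmca_vc}. Your argument yields the same enumeration since the final characterization is identical.
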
 
\begin{proof}

Note that $D_x, D_y, D_S$ and $\Omega$ form a partition of the vertex set $V$.

We first prove that any vertex $z \not\in W$ satisfying conditions~\ref{it:SW} or~\ref{it:TW} must be in $\Omega$. 

Consider first the case~\ref{it:SW} when $z$ sees $D_S^W$ and $D_x^W \cup D_y^W$. So $z$ sees $D_S$ and $C$; we can apply Lemma~\ref{le:sep3part} to partition $(D_S,S,C)$ 
thus $z \in S$.
Consider now the case~\ref{it:TW} when $z$ sees $D_x^W \cup D_y^W$, $D_x\cup \{x\}$ and $D_y \cup \{y\}$ but not $D_S^W$. Again by Lemma~\ref{le:sep3part} applied to partition $(D_S,S,C)$, vertex $z$ cannot be in $S$. Since $z$ has a neighbor in $D_x \cup D_y$, we have $z \in C$. Let $H = G[C \cup \{x,y\}]$ and $T = \Omega \cap C$ (thus we also have $T = \Omega \setminus S$). Recall that $T$ is an $x,y$-minimal separator in $H$ by Proposition~\ref{pr:pmc_active}. By definition of set $D_x$, we have that $D_x \cup \{x\}$ is exactly the component of $H - T$ containing $x$. Note that $D_y \cup \{y\}$ is the union of the component of $H - T$  containing $y$ and of all other components of $H - T$ (that no not see $x$ nor $y$). By applying Lemma~\ref{le:sep3part} on graph $H$, with vertex cover $(W \cap C) \cup \{x,y\}$ and with partition $(D_x \cup \{x\}, T, D_y \cup \{y\})$ we deduce that $z \in T$. 

Conversely, let $z \in \Omega \sm W$. We must prove that either $z$ satisfies conditions~\ref{it:SW} or~\ref{it:TW}, or we are in one of the first two cases of the Lemma. 
We distinguish the cases $z \in S$ and $z \in T$. When $z \in S$, by Lemma~\ref{le:sep3part} applied to partition $(D_S,S,C)$, $z$ must see $D_S$ and $C$. If $z$ sees some vertex in $C \setminus \Omega$, we are done because $z$ sees  $D_x^W \cup D_y^W$ so we are in case~\ref{it:SW}. Assume now that $N(z) \cap C \subseteq \Omega$, we prove that actually $N(z) \cap C = \Omega \cap C = T$, so we are in case~\ref{it:1}. Assume there is $u \in T \setminus N(z)$. By Proposition~\ref{pr:pmc_sep}, there must be a connected component $D$ of $G - \Omega$ such that $z,u \in N(D)$. Since $u \in C$, this component $D$ must be a subset of $C$, so $D \subseteq C \sm \Omega$. Together with $z \in N(D)$, this contradicts the assumption $N(z) \cap C \subseteq \Omega$. 

It remains to treat the case $z \in T$. Clearly $z \in C$ cannot see $D_S$ because $S$ separates $C$ from $D_S$. 
We again take graph $H$, with vertex cover $(W \cap C) \cup \{x,y\}$, and apply Lemma~\ref{le:sep3part} with partition $(D_x \cup \{x\}, T, D_y \cup \{y\})$. We deduce that $z$ sees both  $D_x^W \cup \{x\}$ and
$D_y^W \cup \{y\}$. Assume that $z$ does not see $D_x^W \cup D_y^W$. So $N(z) \cap C \setminus \Omega = \emptyset$ thus $N[z] \subseteq \Omega$. 
If $\Omega$ contains some vertex $u \not\in N[z]$, no component of $G - \Omega$ can see both $z$ and $u$ (because $N(z) \subseteq \Omega$), contradicting Proposition~\ref{pr:pmc_sep}. We conclude that either $z$ sees $D_x^W \cup D_y^W$ (so satisfies condition~\ref{it:TW}) or $\Omega = N[z]$ (thus we are in the second case of the Lemma).
\qed
\end{proof}

\begin{theorem}\label{th:pmca_vc}
Any graph $G$ has $\cO^*(\npmc^{\vc(G)})$ \pmc s with active separators. Moreover the set of its \pmc s with active separators can be listed in $\cO^*(\npmc^{vc(G)})$ time.
\end{theorem}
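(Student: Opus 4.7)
The plan is to enumerate candidate sets $\Omega$ guided by Lemma~\ref{le:pmc4part} and then test each candidate using Corollary~\ref{co:pmc_rec}, which distinguishes genuine \pmc s in polynomial time. I would first compute an optimum vertex cover $W$ in time $\cO^*(2^{\vc(G)})$ via Proposition~\ref{pr:vc_FPT}, and then split the search into the three cases of Lemma~\ref{le:pmc4part}.

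The two special cases of the lemma are handled by a direct search, each with essentially polynomial overhead on top of what we already afford. For case 2, where $\Omega = N[t]$ for some $t \in V$, I would simply iterate over all $t$ and test whether $N[t]$ is a \pmc. For case 1, where $\Omega \sm S = N(t) \cap C$ for some vertex $t$ and some full component $C$ of $G - S$, I would first list all minimal separators of $G$ in time $\cO^*(3^{\vc(G)})$ using Theorem~\ref{th:sep_vc}, and then, for each such $S$ and each vertex $t$ lying in a full component $C$ of $G - S$, test whether $S \cup (N(t) \cap C)$ is a \pmc. Both phases fit in time $\cO^*(3^{\vc(G)}) = \cO^*(\npmc^{\vc(G)})$.

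For the main case (item 3 of Lemma~\ref{le:pmc4part}), I would enumerate all ordered pairs $(x,y)$ of distinct vertices of $G$ (as candidates for the active pair) together with all four-partitions $(P_1, P_2, P_3, P_4)$ of $W$ (interpreted as candidates for $(D_S^W, D_x^W, D_y^W, \Omega \cap W)$). For each guess, the candidate $\Omega$ is reconstructed by taking $P_4 \cup \{x,y\}$ and adding every $z \in V \sm W$ that satisfies either condition (a) of the lemma, namely $N(z)$ meets both $P_1$ and $P_2 \cup P_3$, or condition (b), namely $N(z)$ avoids $P_1$ but meets $P_2 \cup \{x\}$, $P_3 \cup \{y\}$, and $P_2 \cup P_3$. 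I would then test the resulting set with Corollary~\ref{co:pmc_rec} and output it only if it is indeed a \pmc. By Lemma~\ref{le:pmc4part}, every \pmc\ $\Omega$ with an active separator and active pair $(x,y)$ is generated by at least one such guess; the number of guesses is $\cO(n^2) \cdot 4^{|W|} = \cO^*(\npmc^{\vc(G)})$, and each is processed in polynomial time, giving the claimed total bound.

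The only delicate point in this plan is bookkeeping: one must be careful that every \pmc\ with an active separator falls into exactly one of the three cases of Lemma~\ref{le:pmc4part} and that the reconstruction from $(x,y)$ and the four-partition really recovers $\Omega$ (the lemma provides both directions, so no further combinatorial argument is needed). Once this is checked, correctness is immediate because sets are only reported after passing the polynomial-time \pmc\ test of Corollary~\ref{co:pmc_rec}, and completeness follows directly from the exhaustive enumeration driven by the lemma.
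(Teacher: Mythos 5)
Your overall strategy is the one the paper uses: split according to the three cases of Lemma~\ref{le:pmc4part}, generate candidates for each case, and keep only those passing the polynomial-time test of Corollary~\ref{co:pmc_rec}. Your handling of cases 2 and 3 and the global accounting ($n \cdot 4^{|W|}$ candidates, polynomial work per candidate) match the paper's proof; your explicit guess of the active pair $(x,y)$ is in fact a welcome precision, since conditions~\ref{it:SW} and~\ref{it:TW} cannot be evaluated from the four-partition of $W$ alone, and the extra $n^2$ factor is absorbed by the $\cO^*$.

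There is, however, one concrete slip in your treatment of case~\ref{it:1}. The lemma only asserts the existence of a witness $t \in \Omega$ with $\Omega \sm S = N(t)\cap C$, and an inspection of its proof shows that this witness is a vertex $z \in S \sm W$, namely a vertex of $S$ whose entire neighbourhood in $C$ lies inside $\Omega$. You let $t$ range over the vertices of the full components of $G-S$, i.e.\ over $C$, which is disjoint from $S$; so a \pmc\ whose only witness lies in $S$ is never generated by your case-1 loop. Such a \pmc\ is not recovered elsewhere: the witness $z$ sees $D_S^W$ but not $D_x^W \cup D_y^W$, so it violates both conditions~\ref{it:SW} and~\ref{it:TW}, hence the reconstruction of case 3 fails for that $\Omega$, and there is no reason for it to be of the form $N[t]$ either. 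The fix is immediate and costless: for each minimal separator $S$ and each component $C$ of $G-S$, let $t$ range over $S$ (as the paper does), or more safely over all of $V$; the number of candidates remains $\cO^*(3^{\vc(G)})$ and the rest of your argument goes through unchanged.
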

\begin{proof}
The number of \pmc s with active separators satisfying the second condition of Lemma~\ref{le:pmc4part} is at most $n$, and they can all be listed in polynomial time by checking, for each vertex $t$, if $N[t]$ is a \pmc.

For enumerating the \pmc s with active separators satisfying the first condition of Lemma~\ref{le:pmc4part}, we enumerate all minimal separators $S$ using Theorem~\ref{th:sep_vc}, then 
for each $t \in S$ and each of the at most $n$ components $C$ of $G - S$ we check if $S \cup (C \cap N(t))$ is a \pmc. Recall that testing if a vertex set is a \pmc\ can be done in polynomial 
time by Corollary~\ref{co:pmc_rec}. Thus the whole process takes $\cO^*(3^{vc(G)})$ time, and this is also an upper bound on the number of listed objects.

It remains to enumerate the \pmc s with active separators satisfying the third condition of Lemma~\ref{le:pmc4part}. For this purpose, we ``guess'' the sets $D_S^W$ $D_x^W$, $D_y^W$ as in the Lemma and then we compute $\Omega$. More formally, for each four-partition $(D_S^W, D_x^W, D_y^W, \Omega^W)$ of $W$, we let $\Omega^{\overline{W}}$ be the set of vertices $z \not\in W$ satisfying conditions~\ref{it:SW} or~\ref{it:TW} of Lemma~\ref{le:pmc4part}, and we test using Corollary~\ref{co:pmc_rec} if $\Omega  = \Omega^W \cup \Omega^{\overline{W}}$ is indeed a \pmc. By Lemma~\ref{le:pmc4part}, this enumerates in $\cO^*(\npmc^{vc(G)})$ all \pmc s of this type. 
\qed
\end{proof}

For counting and enumerating all \pmc s of graph $G=(V,E)$, including the ones with no active separators, we apply the same ideas as in~\cite{BoTo02}, based on the following statement.
%

\begin{proposition}[\cite{BoTo02}]\label{pr:pmc_a}
Let $G=(V,E)$ be a graph, let $u$ be an arbitrary vertex of $G$ and $\Omega$ be a potential maximal clique of $G$. Denote by $G - u$ the graph $G[V \sm\{u\}]$. Then one of the following holds.
\begin{enumerate}
\item\label{it:active} $\Omega$ has an active minimal separator $S$.
\item\label{it:witha} $\Omega$ is a potential maximal clique of $G - u$.
\item\label{it:minusa} $\Omega \setminus \{u\}$ is a potential maximal clique of $G - u$.
\item\label{it:minsep} $\Omega \setminus \{u\}$ is a minimal separator of $G$.
\end{enumerate}
\end{proposition}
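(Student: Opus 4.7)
The plan is a case analysis driven by Proposition~\ref{pr:pmc_sep} applied to both $G$ and $G':=G-u$. If $\Omega$ has an active minimal separator we are immediately in case~\ref{it:active}, so I would assume throughout that it does not, and show that exactly one of the three remaining outcomes must hold depending on the position of $u$ relative to $\Omega$.

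If $u\notin\Omega$, I would prove case~\ref{it:witha}. Let $C_u$ be the component of $G-\Omega$ containing $u$ and $S_u=N(C_u)$. The components of $G'-\Omega$ are the same as those of $G-\Omega$, except that $C_u$ is replaced by the (possibly zero, possibly several) components $C_u^1,\dots,C_u^k$ of $C_u\setminus\{u\}$; each of their neighborhoods in $G'$ lies in $S_u\subsetneq\Omega$, while every other component $C_j$ ($j\neq u$) and its neighborhood $S_j$ are untouched. Condition~1 of Proposition~\ref{pr:pmc_sep} for $\Omega$ in $G'$ therefore holds trivially. For Condition~2 I would invoke the hypothesis that $S_u$ is not active: by Definition~\ref{de:active}, completing into cliques all $S_j$ with $j\neq u$ and $S_j\not\subseteq S_u$ already turns $\Omega$ into a clique in $G$, so every non-adjacent pair $x,y\in\Omega$ is contained in some such $S_j$; since $u\notin\Omega\supseteq S_j$, we have $S_j=N_{G'}(C_j)$ in $G'$ as well, and Condition~2 is preserved.

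If $u\in\Omega$, set $\Omega'=\Omega\setminus\{u\}$. If $\Omega'$ happens to be a minimal separator of $G$, we are in case~\ref{it:minsep}; otherwise I would verify case~\ref{it:minusa}. The components of $G'-\Omega'$ coincide with those of $G-\Omega$, and each neighborhood in $G'$ is simply $S_i\setminus\{u\}$. Condition~1 for $\Omega'$ in $G'$ could fail only if $u\notin S_i$ and $S_i=\Omega'$, but this would make $\Omega'$ a minimal separator of $G$ and contradict our sub-assumption; the subcase $u\in S_i$ gives $|S_i\setminus\{u\}|<|\Omega|-1=|\Omega'|$ and is harmless. Condition~2 is automatic: any non-adjacent pair $x,y\in\Omega'$ is already covered in $G$ by some $S_i$, and since $x,y\neq u$ they still lie in $S_i\setminus\{u\}=N_{G'}(C_i)$. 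The main technical point is really the $u\notin\Omega$ case, where non-activeness of $S_u$ must be translated into the clique-completion property demanded by Condition~2 of Proposition~\ref{pr:pmc_sep}; once this translation is in hand, the rest is a routine accounting of how deleting $u$ reshapes the components of $G-\Omega$ and their neighborhoods.
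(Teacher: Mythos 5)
The paper does not prove this proposition --- it is imported verbatim from~\cite{BoTo02} --- so there is no in-paper proof to compare against; judged on its own, your argument is correct and is essentially the standard case analysis from the original source. The one step that genuinely needs care, translating ``$S_u=N(C_u)$ is not active'' into ``every non-adjacent pair of $\Omega$ lies in some completed $S_j$ with $C_j\neq C_u$, hence is still covered by a component of $(G-u)-\Omega$,'' you handle correctly, and the $u\in\Omega$ case (where $\Omega\setminus\{u\}$ is either a minimal separator of $G$ or inherits both conditions of Proposition~\ref{pr:pmc_sep} in $G-u$) is also sound.
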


\begin{theorem}\label{th:pmc_vc}
Any graph $G$ has $\cO^*(\npmc^{\vc(G)})$ \pmc s. Moreover the set of its \pmc s can be listed in $\cO^*(\npmc^{vc(G)})$ time.
\end{theorem}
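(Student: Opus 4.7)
The plan is to combine Proposition~\ref{pr:pmc_a} with Theorems~\ref{th:sep_vc} and~\ref{th:pmca_vc} in a bounded-depth recursion. The central observation is that if the vertex $u$ appearing in Proposition~\ref{pr:pmc_a} is chosen inside a known vertex cover $W$ of $G$, then $W\sm\{u\}$ is still a vertex cover of $G-u$, so the parameter strictly drops by one at each recursive call and the recursion depth is bounded by $\vc(G)$ rather than by $n$.

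Concretely, I would first compute a minimum vertex cover $W$ of $G$ in time $\cO^*(2^{\vc(G)})$ via Proposition~\ref{pr:vc_FPT}, then run a recursive procedure $\mathrm{Enum}(H, W')$ that enumerates all \pmc s of $H$ given a vertex cover $W'$. In the base case $W' = \emptyset$, the graph $H$ is edgeless, hence its own minimal triangulation, and its \pmc s are the singletons $\{v\}$ for $v \in V(H)$. For the inductive step, pick any $u \in W'$ and recursively obtain the set $P'$ of \pmc s of $H-u$ with the vertex cover $W' \sm \{u\}$ of size $|W'|-1$. Then, following the four cases of Proposition~\ref{pr:pmc_a}, build a candidate family consisting of (a) all \pmc s of $H$ with an active separator, produced by Theorem~\ref{th:pmca_vc}; (b) each $\Omega'$ and each $\Omega' \cup \{u\}$ for $\Omega' \in P'$; and (c) each $S \cup \{u\}$ where $S$ ranges over the minimal separators of $H$ obtained from Theorem~\ref{th:sep_vc}. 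Finally, filter the candidates using the polynomial-time test of Corollary~\ref{co:pmc_rec} and output the survivors.

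Correctness is immediate from Proposition~\ref{pr:pmc_a}, since every \pmc of $H$ matches at least one of the four listed categories and is therefore produced as a candidate. For the complexity, let $p(n,k)$ bound the number of \pmc s produced and $T(n,k)$ the running time on an $n$-vertex graph equipped with a vertex cover of size $k$. The scheme yields
\[
p(n,k) \;\leq\; 2\,p(n-1,k-1) + \cO^*(4^{k}),
\qquad
T(n,k) \;\leq\; T(n-1,k-1) + \cO^*(4^{k}),
\]
together with $p(n,0), T(n,0) = \poly(n)$. Unfolding the recurrences $k$ times gives $p(n,k) \leq 2^{k}\poly(n) + \cO^*(4^{k})\sum_{i\geq 0}2^{-i} = \cO^*(4^{k})$, and analogously for $T(n,k)$, so the procedure runs in $\cO^*(\npmc^{\vc(G)})$ overall.

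The main obstacle is not the algorithm itself, which is forced by Proposition~\ref{pr:pmc_a}, but the choice of the removed vertex at each level. If $u$ were chosen arbitrarily, then $\vc(G-u)$ might equal $\vc(G)$ and the recursion would read $p(n,k) \leq 2\,p(n-1,k) + \cO^*(4^{k})$, which unfolds to $\cO^*(2^{n}\cdot 4^{k})$ and ruins the bound. Insisting that $u$ be taken from the current vertex cover is what guarantees that $k$ decreases by one at each recursive call, caps the depth at $\vc(G)$, and delivers the desired single-exponential dependence on $\vc$.
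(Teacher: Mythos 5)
Your proof is correct, and it is built from the same three ingredients as the paper's: the case analysis of Proposition~\ref{pr:pmc_a}, the enumeration of \pmc s with active separators (Theorem~\ref{th:pmca_vc}), the enumeration of minimal separators (Theorem~\ref{th:sep_vc}), and a final filter via Corollary~\ref{co:pmc_rec}, assembled into a one-vertex-at-a-time peeling. The genuine difference is the peeling schedule. The paper fixes an arbitrary ordering $v_1,\dots,v_n$ of \emph{all} vertices and iterates over $G_i=G[\{v_1,\dots,v_i\}]$, so its induction has depth $n$; you peel only the vertices of a fixed vertex cover and bottom out at an edgeless graph, so your recursion has depth $\vc(G)$. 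This buys you something concrete. With depth $n$, the pessimistic recurrence $p_i\le 2p_{i-1}+\cO^*(\npmc^{k})$ arising from the two candidates $\Omega'$ and $\Omega'\cup\{v_i\}$ per \pmc\ of $G_{i-1}$ does not close; the paper's count implicitly relies on the additional fact (inherited from~\cite{BoTo02}, and provable from Proposition~\ref{pr:pmc_sep}) that $\Omega'$ and $\Omega'\cup\{v_i\}$ cannot both be \pmc s of $G_i$, so that the per-step growth is additive rather than multiplicative. Your depth-$\vc$ recursion is immune to this subtlety, since even a factor of $2$ per level only contributes $2^{\vc}\cdot\poly(n)\le\cO^*(\npmc^{\vc})$, exactly as you observe in your final paragraph; the only price is the $\cO^*(2^{\vc})$ preprocessing to compute the cover, which Theorem~\ref{th:sep_vc} already pays anyway. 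Both arguments are sound; yours is somewhat more self-contained on the counting side, while the paper's version has the minor aesthetic advantage of not needing the cover to drive the induction (only to invoke Theorems~\ref{th:sep_vc} and~\ref{th:pmca_vc} at each stage).
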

\begin{proof}
Let $(v_1,\dots, v_n)$ be an arbitrary ordering of the vertices of $V$. Denote by $G_i$ the graph $G[\{v_1, \dots, v_i\}]$ induced by the first $i$ 
vertices, for all $i, 1 \leq i \leq n$. Let $k =  \vc(G)$. Note that for all $i$ we have $\vc(G_i) \leq k$. Actually, if $W$ is a vertex cover of $G$, then $W_i = W \cap \{v_1, \dots, v_i\}$ is a vertex 
cover of $G_i$. In particular, by Theorems~\ref{th:sep_vc} and~\ref{th:pmca_vc}, each $G_i$ has at most $3^k$ minimal separators and $\cO^*(\npmc^k)$ \pmc s with active separators.

For $i = 1$, graph $G_1$ has a unique \pmc\ equal to $\{v_1\}$.

For each $i$ from $2$ to $n$, in increasing order, we compute the \pmc s of $G_i$ from those of $G_{i-1}$ using Proposition~\ref{pr:pmc_a}. 
Observe that $G_{i-1} = G_i - v_i$. We initialize the set of \pmc s of $G_i$ with the ones having active separators. This can be done in $\cO^*(\npmc^k)$ time by Theorem~\ref{th:pmca_vc}.
Then for each minimal separator $S$ of $G_i$ we check if $\Omega = S \cup \{v_i\}$ is a \pmc\ of $G_i$ and if so we add it to the set. 
This takes $\cO^*(3^k)$ time by Theorem~\ref{th:sep_vc} and Corollary~\ref{co:pmc_rec}.
Eventually, for each \pmc\ $\Omega'$ of $G_{i-1}$, we test using Corollary~\ref{co:pmc_rec} if $\Omega'$ (resp. $\Omega' \cup \{v_i\}$) is a \pmc\ of $G_i$. If so, we add it to the set of \pmc s of $G_i$. The running time of this last part is the number of \pmc s of $G_{i-1}$ times $nm$. Altogether, it takes $\cO^*(\npmc^k)$ time.

By Proposition~\ref{pr:pmc_a}, this algorithm  covers alls cases and thus lists all \pmc s of $G_i$. Hence for $i=n$ we obtain all \pmc s of $G$, and they have been enumerated in $\cO^*(\npmc^k)$ time.
\qed
\end{proof}

\section{Relations to modular width}\label{se:mw}

A \emph{module} of graph $G = (V,E)$ is a set of vertices $W$ such that, for any vertex $x \in V \setminus W$, either $W \subseteq N(x)$ or $W $ does not intersect $N(x)$. For the reader familiar with the modular decompositions of graphs, the modular width $\mw(G)$ of a graph $G$ is the maximum size of a prime node in the modular decomposition tree. Equivalently,
graph $G$ is of modular width at most $k$ if:
\begin{enumerate}
\item $G$ has at most one vertex (the base case).
\item $G$ is a disjoint union of graphs of modular width at most $k$.
\item $G$ is a \emph{join} of graphs of modular width at most $k$. I.e., $G$ is obtained from a family of disjoint graphs of modular width at most $k$ by taking the disjoint union and then adding all possible edges between these graphs.
\item the vertex set of $G$ can be partitioned into $p \leq k$ modules $V_1, \dots, V_p$ such that $G[V_i]$ is of modular width at most $k$, for all $i, 1 \leq i \leq p$.
\end{enumerate}
The modular width of a graph can be computed in linear time, using e.g.~\cite{TCHP08}. Moreover, this algorithm outputs the algebraic expression of $G$ corresponding to this grammar. 

Let $G=(V,E)$ be a graph with vertex set $V=\{v_1, \dots, v_k\}$ and let $M_i = (V_i,E_i)$ be a family of pairwise disjoint graphs, for all $i$, $1 \leq i \leq k$. Denote by $H$ the graph obtained from $G$ by replacing each vertex $v_i$ by the module $M_i$. I.e., $H=(V_1 \cup \dots \cup V_k, E_1 \cup \dots \cup E_k \cup \{ab \mid a \in V_i, b \in V_j \text{~s.t.~} v_iv_j \in E\})$. We say that graph $H$ has been obtained from $G$ by \emph{expanding} each vertex $v_i$ by the module $M_i$.

A vertex subset $W$ of $H$ is an \emph{expansion} of vertex subset $W_G$ of $G$ if $W = \cup_{v_i \in W_G} V_i$. Given a vertex subset $W$ of $H$, the \emph{contraction} of $W$ is $\{v_i \mid V_i \text{~intersects~} W\}$. 

\begin{lemma}\label{le:sepmod1}
Let $S$ be a minimal $y,z$-separator of $H$, for $y,z \in V_i$. Then $S \cap V_i$ is a minimal separator of $M_i$ and $S \setminus V_i = N_H(V_i)$. 
\end{lemma}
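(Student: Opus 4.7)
\bigskip

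\noindent\textbf{Proof plan.} The plan is to exploit the fact that $V_i$ is a module of $H$: every vertex of $N_H(V_i)$ is adjacent to \emph{all} vertices of $V_i$, and in particular to both $y$ and $z$. I would first argue the ``outside'' equality $S\setminus V_i = N_H(V_i)$, and then use it to transfer the separator from $H$ to $M_i$ via the full-component characterization (Proposition~\ref{pr:full}).

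First I would show $N_H(V_i)\subseteq S$. Any vertex $u\in N_H(V_i)$ is adjacent to every vertex of $V_i$, hence to both $y$ and $z$, so $y$-$u$-$z$ is a $y,z$-path in $H$ of length two; avoiding this path forces $u\in S$. Next I would prove $S\setminus V_i\subseteq N_H(V_i)$. Since $N_H(V_i)\subseteq S$, the only way to leave $V_i$ in $H-S$ is blocked, so the connected component $C_y$ of $y$ in $H-S$ and the connected component $C_z$ of $z$ in $H-S$ are both contained in $V_i$. By minimality of $S$ (Proposition~\ref{pr:full}), $C_y$ and $C_z$ are full components associated to $S$, hence every $u\in S$ has at least one neighbor in $C_y\subseteq V_i$ and one in $C_z\subseteq V_i$; in particular every $u\in S\setminus V_i$ has a neighbor in $V_i$, i.e.\ $u\in N_H(V_i)$. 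Combining both inclusions yields $S\setminus V_i=N_H(V_i)$.

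For the second part, set $S_i = S\cap V_i$. Because $C_y,C_z\subseteq V_i$, their edge sets inside $H$ coincide with their edge sets inside $M_i$, so $C_y$ and $C_z$ are connected subsets of $M_i - S_i$ containing $y$ and $z$ respectively. Moreover, $N_{M_i}(C_y)=N_H(C_y)\cap V_i = S\cap V_i = S_i$, and similarly $N_{M_i}(C_z)=S_i$, so $C_y$ and $C_z$ are two distinct full components associated to $S_i$ in $M_i$, with $y\in C_y$ and $z\in C_z$. Applying Proposition~\ref{pr:full} in the graph $M_i$ then gives that $S_i$ is a minimal $y,z$-separator of $M_i$, which completes the proof.

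There is no real obstacle here; the only mildly delicate point is making sure that the full components of $H-S$ containing $y$ and $z$ stay inside $V_i$, which is exactly what the inclusion $N_H(V_i)\subseteq S$ guarantees. Once that is in place, the module property translates all neighborhoods back and forth between $H$ and $M_i$ transparently.
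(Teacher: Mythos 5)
Your proof is correct. The first step (every vertex of $N_H(V_i)$ is adjacent to both $y$ and $z$, hence lies in $S$) is exactly the paper's opening move, but from there the two arguments diverge. The paper never looks at components: it supposes $S\cap V_i$ is not a minimal $y,z$-separator of $M_i$, picks a proper subset $S'_i$ that still separates, observes that every $y,z$-path of $H$ either stays inside $V_i$ or meets $N_H(V_i)$, so $S'_i\cup N_H(V_i)$ is a $y,z$-separator contained in $S$, and concludes $S=S'_i\cup N_H(V_i)$ by minimality of $S$ --- which delivers both claims of the lemma in one stroke. You instead work through the full-component characterization of Proposition~\ref{pr:full}: you show the components $C_y,C_z$ of $y,z$ in $H-S$ are trapped inside $V_i$, that they are full for $S$ in $H$ and full for $S_i$ in $M_i$ (via $N_{M_i}(C_y)=N_H(C_y)\cap V_i=S_i$), and read off both $S\setminus V_i\subseteq N_H(V_i)$ and the minimality of $S_i$ from there. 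The paper's route is shorter and uses only the bare definition of inclusion-minimality; yours costs a couple more lines but yields strictly more structure --- namely that the full components of $S$ containing $y$ and $z$ live entirely in $V_i$ and survive as full components of $S_i$ in $M_i$ --- which is precisely the kind of information the paper has to re-derive separately in Lemma~\ref{le:sepmod}. The only step you leave slightly implicit is the promotion of $C_y$ from ``connected subset of $M_i-S_i$'' to ``component of $M_i-S_i$''; this does follow from $N_{M_i}(C_y)=S_i$ (any further vertex of $M_i-S_i$ adjacent to $C_y$ would lie in $N_{M_i}(C_y)\setminus S_i$), and it is worth saying so explicitly since Proposition~\ref{pr:full} is stated for components.
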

\begin{proof}
Note that all vertices of $N_H(V_i)$ are in $N_H(y) \cap N_H(z)$, by construction of graph $H$ and the fact that $y$ and $z$ are in the same module induced by $V_i$. Therefore $N_H(V_i)$ must be contained in $S$. Let $S_i = S \cap V_i$. Since $H[V_i] = M_i$, we have that $S_i$ separates $z$ and $y$ in graph $M_i$. Assume that $S_i$ is not a minimal $y,z$-separator of $M_i$, so let $S'_i \subsetneq S_i$ be a minimal $y,z$-separator in graph $M_i$. We claim that $S'_i \cup N_H(V_i)$ is a $y,z$-separator in $H$. Indeed each $y,z$-path of $H$ is either contained in $V_i$ (in which case it intersects $S'_i$) or intersects $N_H(V_i)$. In both cases, it passes through $S'_i \cup N_H(V_i)$, which proves the claim. Since $S'_i \cup N_H(V_i)$ is a subset of $S$ and $S$ is a $y,z$-minimal separator of $H$, the only possibility is that $S = S'_i \cup N_H(V_i)$. This proves that $S \cap V_i$ is a minimal separator of 
$M_i$ and $S \setminus V_i = N_H(V_i)$.
\qed
\end{proof}

\begin{lemma}\label{le:sepmod}
Let $S$ be a minimal separator of $H$. Assume that some $V_i$ intersects $S$, but is not contained in $S$.
Then $V_i$ intersects all full components of $H - S$ associated to $S$. In particular $S \cap V_i$ is a minimal separator in $M_i$ and 
$S \setminus V_i = N_H(V_i)$.
\end{lemma}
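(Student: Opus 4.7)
The plan is to exploit the module property together with the previous Lemma~\ref{le:sepmod1}. Recall that any vertex outside $V_i$ is either adjacent to all of $V_i$ or to none of it, so $V_i \setminus S$ consists of vertices which all share the same external neighborhood $N_H(V_i)$.

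First I would prove the main combinatorial claim: every full component $C$ of $H - S$ intersects $V_i$. Fix a vertex $y \in V_i \setminus S$, which exists by hypothesis, and a vertex $s \in V_i \cap S$. Since $C$ is full, $s \in N(C)$, so there is $c \in C$ with $cs \in E(H)$. Either $c \in V_i$, and we are done, or $c \notin V_i$; in the latter case, since $s \in V_i$ is a neighbor of $c$, the module property forces $c$ to be adjacent to every vertex of $V_i$, and in particular to $y$. But then $y$ lies in the same component as $c$, namely $C$, so $y \in V_i \cap C$. Either way, $V_i \cap C \neq \emptyset$.

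Next I would upgrade this to the ``minimal separator of $M_i$'' conclusion. By Proposition~\ref{pr:full}, $S$ admits at least two distinct full components $C_1, C_2$, and by the previous paragraph we can pick $y \in V_i \cap C_1$ and $z \in V_i \cap C_2$. Then $y$ and $z$ lie in different full components of $H - S$, so Proposition~\ref{pr:full} tells us that $S$ is a minimal $y,z$-separator in $H$ with $y,z \in V_i$. At this point Lemma~\ref{le:sepmod1} applies directly and yields both $S \cap V_i$ is a minimal separator of $M_i$ and $S \setminus V_i = N_H(V_i)$.

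The only mildly delicate point is the two-case argument used to show $V_i \cap C \neq \emptyset$; everything else is bookkeeping on top of the module property and the characterization of minimal separators via full components. No running-time issues arise since the statement is purely structural.
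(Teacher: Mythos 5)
Your proof is correct and follows essentially the same route as the paper's: use the module property of $V_i$ to show that every full component of $H-S$ meets $V_i$ (the paper phrases this as a contradiction by choosing a full component avoiding a fixed $t\in V_i\setminus S$, whereas you conclude directly that $y$ falls into $C$, but the mechanism is identical), then pick representatives of $V_i$ in two distinct full components, invoke Proposition~\ref{pr:full} to see that $S$ is a minimal separator between two vertices of $V_i$, and finish with Lemma~\ref{le:sepmod1}. No gaps.
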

\begin{proof}
Let $x \in V_i \cap S$ and $t \in V_i \setminus S$. 
By Proposition~\ref{pr:full}, there are at least two full  components of $H - S$, associated to $S$. Let $C$ be one of them, not containing $t$. Let $z$ be a neighbor of $x$ in $C$, we prove that $z \in V_i$. If $z \not\in V_i$, then $z \in N_H(V_i)$, and since $V_i$ is a module in $H$ we also have $z \in N_H(t)$. This contradicts the fact that $t$ and $z$ are in different components of $H - S$. It remains that  $z \in V_i$. By applying the same argument for $z$ instead of $t$, it follows that $V_i$ intersects each full component $D$ of $H - S$ and moreover $x$ has a neighbor in $D \cap V_i$. 

By Proposition~\ref{pr:full}, $S$ is a minimal $y,z$-separator in $H$, for some $y, z \in N_H(x) \cap V_i$. The rest follows by Lemma~\ref{le:sepmod1}.
\qed
\end{proof}

\begin{lemma}\label{le:sepexp}
Let $S$ be a minimal separator of $H$. One of the following holds~:
\begin{enumerate}
\item $S$ is the expansion of a minimal separator $S_G$ of $G$. 
\item There is $i \in \{1,\dots, k\}$ such that $S \cap V_i$ is a minimal separator of $M_i$ and $S \setminus V_i = N_H(V_i)$.
\end{enumerate}
\end{lemma}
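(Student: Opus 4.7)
My plan is to case-split on whether some module $V_i$ is \emph{split} by $S$, meaning $V_i \cap S \neq \emptyset$ and $V_i \not\subseteq S$. If such a split module exists, Lemma~\ref{le:sepmod} gives conclusion~2 directly, so the interesting case is when every $V_i$ satisfies $V_i \subseteq S$ or $V_i \cap S = \emptyset$. In that case $S$ is automatically the expansion of $S_G := \{v_i : V_i \subseteq S\}$, and the only remaining task is to show that either $S_G$ is a minimal separator of $G$ (conclusion~1) or that a degenerate instance of conclusion~2 applies (with $S \cap V_i = \emptyset$).

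To analyze the no-split case I would introduce the natural contraction bookkeeping. Let $I_1, \dots, I_r$ be the connected components of $G - S_G$ and set $X_j := \bigcup_{v_i \in I_j} V_i$. Using the module property one checks two simple facts: the $X_j$ partition $V(H)\setminus S$ with $N_H(X_j) \subseteq S$, and the equivalence $N_H(X_j) = S \Leftrightarrow N_G(I_j) = S_G$ holds. Next I would argue that every connected component $C$ of $H - S$ is contained in a single $X_j$: a path of $H - S$ passing through two modules $V_a, V_b$ forces the edge $v_a v_b$ in $G - S_G$, so if $C$ met two different $I_j$'s it would yield a walk in $G - S_G$ contradicting their disconnection. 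By Proposition~\ref{pr:full}, $H - S$ contains at least two full components; each such $C$ lies in some $X_j$, and $N_H(C) = S$ together with $N_H(X_j) \subseteq S$ forces $N_H(X_j) = S$, hence $N_G(I_j) = S_G$, i.e.\ $I_j$ is a full component of $G - S_G$.

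If two full components of $H - S$ fall into distinct $X_{j_1}$ and $X_{j_2}$, then $G - S_G$ has two full components and conclusion~1 follows from Proposition~\ref{pr:full} applied to $G$. The main obstacle is the remaining case, where all full components of $H - S$ sit inside one $X_j$, because there $S_G$ may genuinely fail to separate $G$. I would handle it by showing $|I_j|=1$: if $|I_j| \geq 2$, any $v_i \in I_j$ has some neighbor $v_{i'} \in I_j$, so $V_i$ and $V_{i'}$ are joined by all cross-edges in $H$, which makes $H[X_j]$ connected and so $X_j$ is itself a single component of $H-S$, contradicting two distinct components inside it. Hence $I_j = \{v_i\}$, and the components of $H - S$ inside $X_j = V_i$ are precisely the components of $M_i$; the existence of two full ones simultaneously forces $M_i$ disconnected and $N_G(v_i) = S_G$. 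Then $S \cap V_i = \emptyset$ is a minimal separator of the disconnected graph $M_i$ and $S \setminus V_i = S = N_H(V_i)$, which is exactly conclusion~2.
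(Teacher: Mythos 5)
Your proof is correct and follows essentially the same route as the paper: use Lemma~\ref{le:sepmod} when some module is split by $S$, and otherwise analyze the correspondence between components of $H-S$ and components of $G-S_G$, isolating the degenerate case where both full components lie in a single (necessarily disconnected) module $M_i$ with $S=N_H(V_i)$. The only cosmetic difference is that you certify minimality of $S_G$ by exhibiting two full components and invoking Proposition~\ref{pr:full}, where the paper argues directly that no proper subset of $S_G$ separates; both are fine.
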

\begin{proof}
Assume there is a set $V_i$ intersecting $S$ but not contained in it. By Lemma~\ref{le:sepmod}, $S \cap V_i$ is a minimal separator of $M_i$ and $S \setminus V_i = N_H(V_i)$. Hence we are in the second case of the Lemma.

Otherwise, for any $V_i$ intersecting $S$, we have $V_i \subseteq S$. Thus $S$ is the expansion of a vertex subset $S_G$ of $G$, formed exactly by the vertices $v_i$ of $G$ such that $V_i$ intersects $S$.  Let $C$ and $D$ be two full components of $H - S$ associated to $S$ and let $a \in C$, $b \in D$. Recall that, by Proposition~\ref{pr:full}, $S$ is a minimal $a,b$-separator of $H$. Let $V_k$ be the set containing $a$ and $V_l$ the set containing $b$. Consider first the possibility that $k = l$. Then, by Lemma~\ref{le:sepmod1}, $S$ satisfies the second condition of this lemma, for $i=k=l$. (This case may occur when $M_k$ is disconnected and $S = N_H(V_k)$.)

It remains the case $k \neq l$. We prove that $S_G$ is a minimal $v_k,v_l$-separator of $G$. Consider a $v_k,v_l$ path of $G$. If this path does not intersect $S_G$ in $G$, then there is a path from $a$ to $b$ in $H - S$, obtained by replacing each vertex $v_j$ of the path by some vertex of $V_j$ ($v_k$ and $v_l$ are replaced by $a$ and $b$ respectively). This would contradict the fact that $S$ separates $a$ and $b$ in $H$. Therefore $S_G$ is indeed a $v_k,v_l$-separator in $G$. Assume that $S_G$ is not minimal among the $v_k,v_l$-separators of $G$, and let $v_j \in S_G$ such that $S_G \setminus \{v_j\}$ separates $v_k$ and $v_l$ in $G$. We claim that $S \setminus V_j$ also separates $a$ from $b$ in $H$. By contradiction, assume there is a path from $a \in C \cap V_k$ to $b \in D \cap V_l$ in $H$, avoiding $S \setminus V_j$. By contracting, on this path, all vertices belonging to a same $V_i$ into vertex $v_i$, we obtain a path (or a connected subgraph) joining $v_k$ to $v_l$ in $G$. This contradicts the fact that all such paths should intersect $S_G \setminus \{v_j\}$. Therefore $S_G$ is a minimal separator of $G$. 
\qed
\end{proof}

Lemma~\ref{le:sepexp} provides an injective mapping from the set of minimal separators of $H$ to the union of the sets of minimal separators of $G$ and of the graphs $M_i$. Therefore we have:
\begin{corollary}\label{co:sepexp}
The number of minimal separators of $H$ is at most the number of minimal separators of $G$ plus the number of minimal separators of each $M_i$. 
\end{corollary}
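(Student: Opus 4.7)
The plan is to exhibit an injection $\phi$ from the set of minimal separators of $H$ into the disjoint union of the set of minimal separators of $G$ and of the sets of minimal separators of the modules $M_1,\dots,M_k$; summing cardinalities on the right-hand side then yields the stated inequality.

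For each minimal separator $S$ of $H$, I would invoke the case distinction of Lemma~\ref{le:sepexp}. If case~(2) applies for some index $i$, i.e.\ $S \cap V_i$ is a minimal separator of $M_i$ and $S \setminus V_i = N_H(V_i)$, I fix one such $i$ (any tie-breaking rule will do) and send $S$ to the pair $(i,\, S \cap V_i)$, placed in the $M_i$ slot of the disjoint union. Otherwise case~(1) must apply and $S$ is the expansion of a uniquely determined minimal separator $S_G$ of $G$; I send $S$ to $S_G$, placed in the $G$ slot.

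To verify injectivity, I would split on which slot the image occupies. Preimages lying in different slots are trivially distinct. If two separators $S, S'$ both map to the same $S_G$ in the $G$ slot, then each of them equals the expansion $\bigcup_{v_j \in S_G} V_j$, so $S = S'$. If they both map to the same pair $(i, T)$ in the $M_i$ slot, then $S \cap V_i = T = S' \cap V_i$ and $S \setminus V_i = N_H(V_i) = S' \setminus V_i$, so once again $S = S'$.

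I do not expect a real obstacle here: Lemma~\ref{le:sepexp} already carries out all the structural work, and the only mildly delicate point is choosing a consistent index $i$ when case~(2) is witnessed by several of them. Any fixed choice rule is fine, because the injectivity argument uses only the single index actually selected.
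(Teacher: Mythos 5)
Your proposal is correct and follows exactly the route the paper takes: the paper proves this corollary by the one-line remark that Lemma~\ref{le:sepexp} yields an injection from the minimal separators of $H$ into the union of those of $G$ and of the $M_i$, which is precisely the map you construct. Your explicit verification of injectivity in each slot (and the tie-breaking when several indices $i$ witness the second case) just fills in details the paper leaves implicit.
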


We now aim to prove a statement equivalent of Corollary~\ref{co:sepexp}, for potential maximal cliques instead of minimal separators. 

\begin{lemma}\label{le:pmcexp}
Let $\Omega$ be a \pmc\ of $H$, and let $\Omega_G = \{v_i \mid V_i \text{~intersects~} \Omega\}$. Assume that $\Omega$ is the expansion of $\Omega_G$, i.e. $\Omega  = \cup_{v_i \in \Omega_G} V_i$. Then $\Omega_G$ is a \pmc\ of $G$.
\end{lemma}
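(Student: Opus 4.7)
The plan is to verify the two conditions of Proposition~\ref{pr:pmc_sep} for $\Omega_G$ in $G$, translating information between $H$ and $G$ via the expansion/contraction operations. The key preliminary observation I would first establish is: for any connected component $D$ of $H - \Omega$, the contraction $D_G := \{v_i \mid V_i \cap D \neq \emptyset\}$ is contained in $V(G) \setminus \Omega_G$, induces a connected subgraph of $G - \Omega_G$, and satisfies $N_H(D) = \bigcup_{v_i \in N_G(D_G)} V_i$. The identity holds because $\Omega$ is partitioned into whole modules $V_i$ with $v_i \in \Omega_G$, and a module $V_i$ meets $N_H(D)$ iff some $d \in D \cap V_l$ is $H$-adjacent to a vertex of $V_i$, which happens iff $v_iv_l \in E(G)$; connectedness of $D_G$ follows by contracting paths inside $D$ to walks in $G$.

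For Condition~1, I fix a component $C$ of $G - \Omega_G$ and examine its expansion $C^H = \bigcup_{v_i \in C} V_i \subseteq V(H) \setminus \Omega$. When $|C| \geq 2$, or when $C = \{v_i\}$ with $M_i$ connected, the set $C^H$ is a single component of $H - \Omega$ with $N_H(C^H) = \bigcup_{v_i \in N_G(C)} V_i$; when $C = \{v_i\}$ with $M_i$ disconnected, $C^H = V_i$ splits into the components of $H[V_i]$, yet each such component still has neighborhood $\bigcup_{v_k \in N_G(v_i)} V_k$ in $H$, since modules are separated in $H$. In every case Proposition~\ref{pr:pmc_sep} applied to $\Omega$ in $H$ gives a strict inclusion in $\Omega$, and contracting yields $N_G(C) \subsetneq \Omega_G$.

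For Condition~2, I take non-adjacent $v_i, v_j \in \Omega_G$ (necessarily $i \neq j$) and pick arbitrary $a \in V_i$, $b \in V_j$. Since $v_iv_j \notin E(G)$, $a$ and $b$ are non-adjacent in $H$. Applying the alternative form of Condition~2 (recalled right after Proposition~\ref{pr:pmc_sep}) to $\Omega$ in $H$ produces a component $D$ of $H - \Omega$ with $a, b \in N_H(D)$; by the preliminary observation this forces $V_i, V_j \subseteq N_H(D)$ and $v_i, v_j \in N_G(D_G)$. Letting $C$ be the component of $G - \Omega_G$ containing the connected set $D_G$, we obtain $\{v_i, v_j\} \subseteq N_G(C)$, closing Condition~2.

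The only subtlety I anticipate is the bookkeeping when some module $M_i$ is internally disconnected and $v_i$ is isolated in $G - \Omega_G$: the single component $\{v_i\}$ of $G - \Omega_G$ then expands to several distinct components of $H - \Omega$, so the expansion does not induce a bijection between components. Verifying that all of these components nevertheless share the same $H$-neighborhood, and hence contract to the same neighborhood in $G$, is the main non-routine point; once this is settled, the rest of the argument is a mechanical translation along the expansion map.
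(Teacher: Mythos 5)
Your proof is correct and follows essentially the same route as the paper's: both verify the two conditions of Proposition~\ref{pr:pmc_sep} for $\Omega_G$ by translating components of $H-\Omega$ and $G-\Omega_G$ into one another via expansion/contraction, including the same special handling of a single-vertex component $\{v_i\}$ whose module $M_i$ is internally disconnected. Your preliminary observation on contracting components and their neighborhoods just packages uniformly what the paper argues in place (by contradiction for Condition~1 and by contracting an $a,b$-path for Condition~2).
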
 
\begin{proof}
We prove that $\Omega_G$ satisfies, in graph $G$, the conditions of Proposition~\ref{pr:pmc_sep}. For the first condition, let $C_G$ be a component of $G - \Omega_G$ and let $S_G = N_G(C_G)$. Assume that $S_G$ is not strictly contained in $\Omega_G$, hence $S_G = \Omega_G$. Let $C$ be the expansion of $C_G$ in $H$ and note that $N_H(C)$ is the expansion of $N_G(C_G)$, thus $N_H(C) = \Omega$. If $C_G$ is formed by at least two vertices, since $G[C_G]$ is connected then so is $H[C]$.  Therefore, in graph $H$, we have $N_H(C) = \Omega$ and $C$ is a component of $H - \Omega$. But this contradicts the first condition of Proposition~\ref{pr:pmc_sep} applied to the \pmc\ $\Omega$ of $H$. In the case that $C_G$ is formed by a unique vertex $v_k$, its expansion $C = V_k$ might not induce a connected subset in $H$ (if $M_k$ is disconnected). But it is sufficient to consider a connected component $V'_k$ of $H[V_k]$, and again this is also a component of $H - \Omega$ with the property that its neighborhood in $H$ is the whole set $\Omega$, contradicting Proposition~\ref{pr:pmc_sep} applied to $\Omega$.

For the second condition of Proposition~\ref{pr:pmc_sep}, let $v_j, v_k \in \Omega_G$ such that $v_jv_k$ is not an edge of $G$. Let $a \in V_j$ and $b \in V_k$. These vertices are non-adjacent in $H$, so by Proposition~\ref{pr:pmc_sep} applied to the \pmc\ $\Omega$ of $H$ there must be a component $C$ of $H - \Omega$ seeing both $a$ and $b$. Consider an $a,b$-path in $H[C \cup \{a,b\}]$. The contraction of this path contains a $v_j,v_k$-path in $G$, whose internal vertices are not in $\Omega_G$. This proves that $v_j$ and $v_k$ are in the neighborhood of a same component of $G - \Omega_G$, thus $\Omega_G$ satisfies the second condition of Proposition~\ref{pr:pmc_sep}.
\qed
\end{proof}

\begin{lemma}\label{le:pmcmod}
Let $\Omega$ be a \pmc\ of $H$, and assume that there is some set $V_i$ that intersects $\Omega$ but is not contained in $\Omega$. Then $\Omega \cap V_i$ is a \pmc\ of $M_i$ and $\Omega \setminus V_i = N_H(V_i)$.
\end{lemma}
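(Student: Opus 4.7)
The plan is to fix witnesses $x \in V_i \cap \Omega$ and $t \in V_i \setminus \Omega$ provided by the hypothesis, and to prove the two assertions in order by combining Proposition~\ref{pr:pmc_sep} with Lemma~\ref{le:sepmod}.

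For the equality $\Omega \setminus V_i = N_H(V_i)$, I first establish $N_H(V_i) \subseteq \Omega$. Given $u \in N_H(V_i)$, suppose toward a contradiction that $u \notin \Omega$. Since $V_i$ is a module of $H$, $u$ is adjacent to every vertex of $V_i$; in particular $u$ is adjacent to $t$, so $u$ belongs to the component $C_t$ of $H - \Omega$ containing $t$. Because $u$ is also adjacent to $x$, we get $x \in N_H(C_t)$. Setting $S = N_H(C_t)$, we thus have $x \in V_i \cap S$ and $t \in V_i \setminus S$, so Lemma~\ref{le:sepmod} applies and yields $N_H(V_i) = S \setminus V_i \subseteq \Omega$, contradicting $u \notin \Omega$. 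For the reverse inclusion $\Omega \setminus V_i \subseteq N_H(V_i)$, take $u \in \Omega \setminus V_i$ and suppose $u \notin N_H(V_i)$; the module property then forces $u$ to be non-adjacent to $x$, so Proposition~\ref{pr:pmc_sep} supplies a component $C'$ of $H - \Omega$ with $u, x \in N_H(C')$. A neighbor $x'$ of $x$ in $C'$ must lie in $V_i$, for otherwise $x' \in N_H(V_i) \subseteq \Omega$ contradicts $x' \in C'$; conversely a neighbor $u'$ of $u$ in $C'$ must lie outside $V_i$. Any path from $u'$ to $x'$ inside $C'$ then contains an edge $zz'$ with $z \notin V_i$ and $z' \in V_i$, giving $z \in N_H(V_i) \cap C' \subseteq N_H(V_i) \setminus \Omega$, contradicting the first inclusion.

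To show that $\Omega \cap V_i$ is a \pmc\ of $M_i$, I verify the two conditions of Proposition~\ref{pr:pmc_sep} in $M_i$. The key preliminary observation is that the components of $M_i - (\Omega \cap V_i)$ are exactly the nonempty sets of the form $C \cap V_i$ where $C$ is a component of $H - \Omega$: any path in $H - \Omega$ between two vertices of $V_i \setminus \Omega$ must stay inside $V_i$, because leaving $V_i$ would force a visit to $N_H(V_i)$, which is contained in $\Omega$ by the previous step. For condition~(1), suppose toward a contradiction that $N_{M_i}(D) = \Omega \cap V_i$ for some component $D = C \cap V_i$; then any $w \in \Omega \setminus N_H(C)$ (nonempty because $\Omega$ is a \pmc\ of $H$) cannot lie in $V_i$, so $w \in N_H(V_i)$ is adjacent to all of $V_i$ and hence to $D$, placing $w$ in $N_H(C)$ --- a contradiction. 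Condition~(2) is inherited from $\Omega$: given non-adjacent $y, z \in \Omega \cap V_i$, Proposition~\ref{pr:pmc_sep} supplies a component $C$ of $H - \Omega$ with $y, z \in N_H(C)$, and the neighbors of $y$ and $z$ in $C$ must lie in $V_i$ (else they would be in $N_H(V_i) \subseteq \Omega$), whence $y, z \in N_{M_i}(C \cap V_i)$.

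The main technical obstacle will be the inclusion $\Omega \setminus V_i \subseteq N_H(V_i)$: unlike in the minimal-separator setting handled by Lemma~\ref{le:sepmod}, the module property alone does not force external vertices of $\Omega$ to neighbor $V_i$. The trick is to localize the argument to a single component $C'$ of $H - \Omega$ witnessing a hypothetical non-adjacency, and then to use the already-established inclusion $N_H(V_i) \subseteq \Omega$ to convert any path crossing the boundary of $V_i$ inside $C'$ into a contradiction.
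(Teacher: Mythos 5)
Your proof is correct, and it takes a noticeably different route from the paper's. Both arguments rest on the same two ingredients---Lemma~\ref{le:sepmod} and the characterization in Proposition~\ref{pr:pmc_sep}---but the paper first establishes that $\Omega$ contains a minimal $x,y$-separator $S$ for some $x \in \Omega \cap V_i$ and $y \in V_i \setminus \Omega$ (via a case distinction on whether $V_i$ meets a minimal separator contained in $\Omega$), then uses Lemma~\ref{le:sepmod1} and Proposition~\ref{pr:pmc_comp} to show that the component $C$ of $H-S$ carrying $\Omega \setminus S$ lies inside $V_i$, and only then reads off $\Omega \setminus V_i = S \setminus V_i = N_H(V_i)$. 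You instead prove the set equality directly by two inclusions: for $N_H(V_i) \subseteq \Omega$ you manufacture the needed minimal separator as $N_H(C_t)$ where $C_t$ is the component of $H-\Omega$ containing $t$ (this is legitimate by the ``Moreover'' clause of Proposition~\ref{pr:pmc_sep}, which you should cite explicitly at that point, since Lemma~\ref{le:sepmod} is stated only for minimal separators); and for $\Omega \setminus V_i \subseteq N_H(V_i)$ you use the non-adjacency condition of Proposition~\ref{pr:pmc_sep} together with a boundary-crossing argument along a path inside a single component of $H-\Omega$. This avoids both the paper's case analysis and Proposition~\ref{pr:pmc_comp}, at the price of not exhibiting $S \cap V_i$ as a minimal separator of $M_i$ (which the statement does not require). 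Your verification of the two \pmc\ conditions for $\Omega \cap V_i$ hinges on the clean observation that, once $N_H(V_i) \subseteq \Omega$ is known, the components of $M_i - (\Omega \cap V_i)$ are exactly the components of $H - \Omega$ contained in $V_i$; the paper's verification is in the same spirit but splits components according to whether they lie in $C$. Overall your decomposition is more self-contained and arguably cleaner, while the paper's yields extra structural information about the separator $S$ along the way.
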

\begin{proof}
Let $V_i$ be a vertex set that intersects $\Omega$, but is not contained in $\Omega$. 

We claim that $\Omega$ contains a minimal $x,y$-separator of $H$, for some pair of vertices $x \in \Omega \cap V_i$ and $y \in V_i \setminus \Omega$. If $V_i$ intersects some minimal separator $S$ contained in $\Omega$, then by Lemma~\ref{le:sepmod}, $S \setminus V_i$ is a minimal separator of $M_i$ and $V_i$ intersects all full components of $H - S$ associated to $S$, which proves our claim. Consider the case when $V_i$ does not intersect any minimal separator of $H$ contained in $\Omega$. Let $x \in \Omega \cap V_i$ and note that $\Omega \setminus \{x\}$ separates, in graph $H$, vertex $x$ from all other vertices (because by Proposition~\ref{pr:pmc_sep}, $x$ has no neighbors in $H - \Omega$). Recall that $V_i \not\subseteq \Omega$, thus there is some $y \in V_i \setminus \Omega$, then $\Omega$ contains some minimal $x,y$-separator $S$ in graph $H$.

By Lemma~\ref{le:sepmod1}, $S \setminus V_i$ is a minimal separator of $M_i$ and $V_i$ intersects all full components of $H - S$ associated to $S$. Let $C$ be the unique component of $H - S$ intersecting $\Omega$; recall that it exists and moreover it is full w.r.t. $S$, by Proposition~\ref{pr:pmc_comp}. Then, by Lemma~\ref{le:sepmod}, $C$ also intersects $V_i$. Also by Lemma~\ref{le:sepmod}, $S \setminus V_i = N_H(V_i)$ and $S_i = S \cap V_i$ is a minimal separator of $M_i$. We claim that actually $C \subseteq V_i$ and $C$ is also a full component of $M_i - S_i$. Recall that $S \setminus V_i = N_H(V_i)$ separates in graph $H$ the vertices of $V_i$ from the rest of the graph. Since $C$ intersects $V_i$, $H[C]$ is connected and $N_H(V_i)$ separates $V_i$ from all other vertices, we must have $C \subseteq V_i$. Since $H[C]$ is connected, so is $M_i[C]$, thus $C$ is contained in some component of $M_i - S_i$. But each such component is also a component of $H - S$, hence $C$ is both a component of $H-S$ and of $M_i - S_i$. In particular $\Omega \cap C \subseteq V_i$.

It remains to prove that $\Omega_i = \Omega \cap V_i$  is a \pmc\ of $M_i$. By the above observations, we also have $\Omega_i = \Omega \setminus N_H(V_i)$. We show that $\Omega_i$ satisfies, in graph $M_i$, the conditions of Proposition~\ref{pr:pmc_sep}. Let $D$ be a component of $M_i - \Omega_i$. Observe that $D$ is also a component of $H - \Omega$ and let $T = N_{M_i}(D)$. Either $D$ is a component of $M_i - \Omega_i$ disjoint from $C$, or it is contained in $C$. In the former case, $T$ is a subset of $S_i$, hence $T$ is a strict subset of $\Omega_i$ (since $S_i$ is itself a strict subset of $\Omega_i$ by Proposition~\ref{pr:pmc_sep} applied to \pmc\ $\Omega$ of $H$). In the later case, if $T = \Omega_i$, note that $N_H(D) = \Omega$ because $\Omega \setminus V_i = N_H(V_i)$ is also contained in the neighborhood of $D$ in $H$. This contradicts Proposition~\ref{pr:pmc_sep} applied to \pmc\ $\Omega$ of $H$.

For the second condition, let $x,y \in \Omega_i$, non-adjacent in $M_i$. Then there is a component $F$ of $H - \Omega$ seeing, in graph $H$, both $x$ and $y$ (by Proposition~\ref{pr:pmc_sep} applied to $\Omega$). Since this component sees $V_i$, it must be contained in $V_i$. So $F$ is also a component of $M_i - \Omega_i$ seeing both $x$ and $y$ in $M_i$, which concludes our proof.
\qed
\end{proof}

From Lemmata~\ref{le:pmcexp} and~\ref{le:pmcmod}, we directly deduce~:
\begin{lemma}\label{le:pmcmw}
Let $\Omega$ be a \pmc\ of $H$. One of the following holds~:
\begin{enumerate}
\item $\Omega$ is the expansion of a \pmc\ $\Omega_G$ of $G$. 
\item There is some $i \in \{1, \dots, k\}$ such that $\Omega \cap V_i$ is a \pmc\ of $M_i$ and $\Omega \setminus V_i = N_H(V_i)$.
\end{enumerate}
\end{lemma}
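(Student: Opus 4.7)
The plan is to prove Lemma~\ref{le:pmcmw} by a simple dichotomy on how the sets $V_1, \dots, V_k$ interact with $\Omega$, and then invoke the two previous lemmas (Lemma~\ref{le:pmcexp} and Lemma~\ref{le:pmcmod}) as black boxes. There is essentially no new combinatorial content here: everything has already been extracted in the preceding lemmas.

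First, I would split into two cases according to whether every module that meets $\Omega$ is entirely contained in $\Omega$. In the first case, for every $i \in \{1, \dots, k\}$, either $V_i \cap \Omega = \emptyset$ or $V_i \subseteq \Omega$. Setting $\Omega_G = \{v_i \mid V_i \cap \Omega \neq \emptyset\}$, this is precisely the statement that $\Omega$ is the expansion of $\Omega_G$, so Lemma~\ref{le:pmcexp} applies and yields that $\Omega_G$ is a \pmc\ of $G$. This is the first alternative of the conclusion.

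In the second case, there exists some index $i$ with $V_i \cap \Omega \neq \emptyset$ and $V_i \not\subseteq \Omega$. Then the hypotheses of Lemma~\ref{le:pmcmod} are met for this $i$, and we directly obtain that $\Omega \cap V_i$ is a \pmc\ of $M_i$ and $\Omega \setminus V_i = N_H(V_i)$, which is the second alternative of the conclusion.

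Since the two cases are exhaustive and mutually exclusive, this covers all possibilities for $\Omega$ and establishes the lemma. The only subtlety worth flagging is to make sure the case split is stated in terms of \emph{some} $V_i$ meeting $\Omega$ without being contained in it (rather than all of them), so that the negation of this condition coincides exactly with the ``expansion'' hypothesis required by Lemma~\ref{le:pmcexp}; beyond this bookkeeping, the proof is a one-line invocation of each of the two earlier lemmas and should present no real obstacle.
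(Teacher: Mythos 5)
Your proof is correct and is exactly the paper's argument: the paper states that the lemma follows directly from Lemma~\ref{le:pmcexp} and Lemma~\ref{le:pmcmod} via precisely this dichotomy on whether some $V_i$ meets $\Omega$ without being contained in it. Nothing further is needed.
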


The previous lemma provides an injective mapping from the set of \pmc s of $H$ to the union of the sets of \pmc s of $G$ and of the graphs $M_i$. Therefore we have:
\begin{corollary}\label{co:pmcexp}
The number of \pmc s of $H$ is at most the number of \pmc s of $G$ plus the number of \pmc s of each $M_i$. 
\end{corollary}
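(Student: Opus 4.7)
The plan is to define an injection $\Phi$ from the set of \pmc s of $H$ into the disjoint union of the set of \pmc s of $G$ and the sets of \pmc s of the graphs $M_1, \dots, M_k$; comparing cardinalities then gives the stated inequality. The structural work has already been done in Lemma~\ref{le:pmcmw}, which asserts that every \pmc\ $\Omega$ of $H$ is either the expansion of a \pmc\ of $G$ (case one) or satisfies, for some index $i$, that $\Omega \cap V_i$ is a \pmc\ of $M_i$ with $\Omega \setminus V_i = N_H(V_i)$ (case two).

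I would define $\Phi$ as follows. If every $V_i$ intersecting $\Omega$ is fully contained in $\Omega$, we are in case one, and I set $\Phi(\Omega) := \Omega_G = \{v_i \mid V_i \subseteq \Omega\}$, which is a \pmc\ of $G$ by Lemma~\ref{le:pmcexp}. Otherwise, I pick an index $i$ such that $V_i$ intersects $\Omega$ but $V_i \not\subseteq \Omega$, and set $\Phi(\Omega) := \Omega \cap V_i$, which Lemma~\ref{le:pmcmod} certifies is a \pmc\ of $M_i$. The two cases are mutually exclusive by construction, so $\Phi$ lands in a well-defined component of the disjoint union.

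The subtlety I expect to be the main obstacle is showing that the index $i$ in case two is unique, as otherwise $\Phi$ would be multi-valued. Suppose for contradiction that $V_i$ and $V_j$ with $i \neq j$ both intersect $\Omega$ without being contained. Applying Lemma~\ref{le:pmcmod} to $V_i$ gives $\Omega \setminus V_i = N_H(V_i)$, whence $\Omega \cap V_j \subseteq N_H(V_i)$; but $V_j$ is a module of $H$, so $V_j \cap N_H(V_i)$ is either empty or all of $V_j$, which forces $V_j \subseteq N_H(V_i) \subseteq \Omega$, contradicting $V_j \not\subseteq \Omega$. Once this is settled, injectivity follows from explicit inversion formulas: case one gives $\Omega = \bigcup_{v_i \in \Omega_G} V_i$ and case two gives $\Omega = (\Omega \cap V_i) \cup N_H(V_i)$, so in both cases $\Omega$ is uniquely reconstructible from its image (together with the index $i$ labelling the component of the disjoint union), concluding the proof.
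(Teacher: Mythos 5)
Your proof is correct and follows the same route as the paper, which simply asserts that Lemma~\ref{le:pmcmw} yields an injective mapping from the \pmc s of $H$ into the disjoint union of the \pmc s of $G$ and of the $M_i$. You additionally verify the well-definedness of the map (uniqueness of the index $i$ in the second case, via the module property of $V_j$) and the explicit inversion formulas, details the paper leaves implicit; both checks are sound.
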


The following proposition bounds the number of minimal separators and \pmc s of arbitrary graphs with respect to $n$.
\begin{proposition}[\cite{FoVi10,FoVi12}]\label{pr:nbsep}\label{pr:nbpmc}
Every $n$-vertex graph has $\cO(\goldratio^n)$ minimal separators and $\cO(\pmcb^n)$ \pmc s. Moreover, these objects can be enumerated within the same running times. 
\end{proposition}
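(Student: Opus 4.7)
The plan is to establish both bounds by an inductive vertex-peeling argument, in the spirit of branching algorithms. I would first handle the minimal separator bound. Fix an arbitrary vertex $v$ and classify every minimal separator $S$ of $G$ as follows. If $v \in S$, then by Proposition~\ref{pr:full}, $v$ has neighbors in at least two full components of $G - S$; I would enumerate separators of this type recursively by examining how $S \setminus \{v\}$ behaves inside $G - v$. If $v \notin S$, then $v$ lies in some component $C$ of $G - S$, and I would branch either on $v$'s assignment to $C$ or on a neighbor of $v$ that may lie in $S$. The aim is to set up a recurrence of Fibonacci type, $T(n) \le T(n-1) + T(n-2)$, whose solution is $\cO(\goldratio^n)$.

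For the \pmc\ bound, I would use Proposition~\ref{pr:pmc_a} (already available above) as the engine of the recursion. Picking any vertex $u$, every \pmc\ of $G$ either has an active separator --- in which case it can be generated from the minimal separators of $G$ via Proposition~\ref{pr:pmc_active}, which identifies $\Omega \setminus S$ with a minimal $x,y$-separator of a smaller induced subgraph --- or it is a \pmc\ of $G - u$ (possibly shifted by adding or removing $u$), or it is $\{u\}$ together with a minimal separator of $G$. Each candidate vertex set is verified in polynomial time using Corollary~\ref{co:pmc_rec}. This gives a recurrence which, combined with the minimal separator bound just proved, is designed to solve to $\cO(\pmcb^n)$.

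The main obstacle will be tightening the branching analysis to reach the sharp constants $\goldratio$ and $\pmcb$, rather than the looser $2^n$ or $3^n$ that a naive analysis would give. This typically requires a measure-and-conquer style argument with non-uniform weights on vertices, or else a refined case analysis in which the branching vertex is chosen based on its degree or local neighborhood structure so that each branch removes enough weight from the recursion measure. Converting the counting bounds into enumeration algorithms within the same asymptotic running time is then routine, since each branching node does only polynomial work: we enumerate a bounded number of candidate vertex subsets and test membership in the class of minimal separators, respectively \pmc s, via Corollary~\ref{co:pmc_rec}.
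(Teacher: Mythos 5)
Note first that the paper does not actually prove this proposition: it is imported verbatim from the cited works of Fomin and Villanger, so there is no internal proof to compare against. Judged on its own, your sketch has a genuine gap, and it sits exactly where the entire content of the statement lies, namely in the constants $\goldratio$ and $\pmcb$. For minimal separators, the peeling recurrence you propose does not hold as stated. If $v \in S$, then $S \setminus \{v\}$ is indeed a minimal separator of $G - v$ (both full components survive and remain full) and the map is injective; but if $v \notin S$, the set $S$ need \emph{not} be a minimal separator of $G - v$ --- on the path $a\,v\,s\,b$ the set $\{s\}$ is a minimal $a,b$-separator of $G$ but not a minimal separator of $G - v$, since $\{a\}$ is no longer a full component. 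So the second branch has no clean recursive structure, and nothing in your sketch produces the $T(n-2)$ term a Fibonacci-type recurrence needs. The actual argument in \cite{FoVi10,FoVi12} is not vertex peeling: a minimal separator is recovered as $S = N(C)$ where $C$ is a \emph{smallest} full component, so that $2|C| + |S| \leq n$; one then bounds the number of connected sets $C$ with $|C| = a$ and $|N(C)| \leq b$ by roughly $n\binom{a+b}{b}$ (a branching argument does appear here, but on connected sets, not on separators), and maximizing $\binom{a+b}{a}$ subject to $2a + b \leq n$ is precisely what yields $\goldratio^n$.

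For the \pmc{} bound the gap is more severe. The recursion of Proposition~\ref{pr:pmc_a} reduces everything to counting the \pmc s \emph{with an active separator}, and that is the hard case. Your plan generates these from minimal separators via Proposition~\ref{pr:pmc_active}; carried out, this identifies an active \pmc{} with a pair consisting of a minimal separator of $G$ and a minimal separator of an induced subgraph, which gives a bound quadratic in the number of minimal separators, i.e.\ $\cO^*(\goldratio^{2n}) \approx \cO^*(2.618^n)$. That is the older Bouchitt\'e--Todinca bound, not $\cO(\pmcb^n)$. Reaching $\pmcb^n$ requires a separate representation lemma for \pmc s (again counted through connected sets with small neighborhoods), which your sketch does not supply. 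Deferring all of this to ``a measure-and-conquer style argument'' or ``a refined case analysis'' is deferring the entire proof: the routine parts (polynomial-time verification via Corollary~\ref{co:pmc_rec}, turning counting into enumeration) are correct but were never in question.
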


We can now prove the main result of this section.
\begin{theorem}\label{th:sepmw}\label{th:pmcmw}
For any graph $G=(V,E)$, the number of its minimal separators is $\cO(n \cdot \goldratio^{\mw(G)})$ and the number of its potential maximal cliques is $\cO(n \cdot \pmcb^{\mw(G)})$. Moreover, the minimal separators and the potential maximal cliques can be enumerated in time $\cO^*(\goldratio^{\mw(G)})$ and $\cO^*(\pmcb^{\mw(G)})$ time respectively.
\end{theorem}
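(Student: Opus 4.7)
The plan is to recurse on the modular decomposition tree $T$ of $G$, computed in linear time via~\cite{TCHP08}. Each internal node $v$ of $T$ carries a quotient graph $G_v$ which is either an edgeless graph on $p_v$ vertices (parallel node), a clique $K_{p_v}$ (series node), or a prime graph on $p_v\leq \mw(G)$ vertices. The graph $H_v$ associated to the subtree rooted at $v$ is obtained from $G_v$ by expanding each of its vertices by the subgraph produced at the corresponding child, exactly in the setting of Corollaries~\ref{co:sepexp} and~\ref{co:pmcexp}. Iterating those two corollaries down $T$ gives $\#\operatorname{sep}(G) \leq \sum_{v}\#\operatorname{sep}(G_v)$ and $\spmc(G) \leq \sum_{v}\spmc(G_v)$, where the sums range over internal nodes of $T$.

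It then suffices to bound the per-node contributions. An edgeless graph on $p$ vertices contains a unique minimal separator (the empty set, by Proposition~\ref{pr:full}) and $p$ \pmc s (its singleton maximal cliques, since it coincides with its own unique minimal triangulation). A clique has no minimal separator and a single \pmc. Finally, a prime quotient of size $p_v\leq \mw(G)$ has $\cO(\goldratio^{\mw(G)})$ minimal separators and $\cO(\pmcb^{\mw(G)})$ \pmc s by Proposition~\ref{pr:nbsep}. Since $T$ has $n$ leaves and each internal node has at least two children, $T$ contains at most $n-1$ internal nodes and the total number of children across all parallel nodes is at most $2n-2$. Summing these contributions yields the combinatorial bounds $\cO(n\cdot\goldratio^{\mw(G)})$ and $\cO(n\cdot\pmcb^{\mw(G)})$.

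For the algorithmic part, we process $T$ bottom-up. At an internal node $v$, Lemmas~\ref{le:sepexp} and~\ref{le:pmcmw} characterise every minimal separator (resp.~\pmc) of $H_v$ either as the expansion of an object of $G_v$, or as $N_{H_v}(V_i)\cup S_i$ (resp.~$N_{H_v}(V_i)\cup \Omega_i$) for some minimal separator $S_i$ (resp.~\pmc\ $\Omega_i$) of a child graph $M_i$. We thus enumerate the objects of $G_v$, applying the algorithm of Proposition~\ref{pr:nbsep} on prime quotients while handling series and parallel quotients by inspection, combine them with the objects already produced at each child, and test every resulting candidate in polynomial time using Corollary~\ref{co:sep_rec} (resp.~Corollary~\ref{co:pmc_rec}). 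The cumulative number of candidates is bounded by the sums above, so the total running time is $\cO^*(\goldratio^{\mw(G)})$ for minimal separators and $\cO^*(\pmcb^{\mw(G)})$ for \pmc s.

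The principal point to verify carefully is that the degenerate quotients, which can have arbitrarily many vertices, contribute only a polynomial number of candidates in total; this is what the per-node \pmc\ count of $p_v$ at parallel nodes, combined with the telescoping across $T$, achieves. Everything else is a direct combination of the lemmas and corollaries established earlier in this section.
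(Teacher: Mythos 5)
Your proof is correct and follows essentially the same route as the paper: bottom-up induction over the modular decomposition tree, Corollaries~\ref{co:sepexp} and~\ref{co:pmcexp} to telescope the counts of minimal separators and \pmc s across the tree, and Proposition~\ref{pr:nbsep} applied to the prime quotients. The one point where you genuinely diverge is the treatment of disjoint-union (parallel) nodes, which the paper explicitly flags as the delicate case for \pmc s: there the quotient is edgeless and has $p_v$ \pmc s, and the paper disposes of these by a structural argument showing that any expansion-type \pmc\ at such a node is the vertex set of a child, is disconnected from the rest of the graph, hence must induce a clique by Proposition~\ref{pr:pmc_sep} and is therefore already counted as a \pmc\ of that child --- so parallel nodes contribute nothing new. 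You instead simply charge the $p_v$ singleton \pmc s to the node and observe that $\sum_v p_v$ over all internal nodes is at most the number of tree edges, i.e.\ $\cO(n)$, so the total extra contribution is an additive $\cO(n)$ absorbed by the stated bound. Your accounting is arguably cleaner since it avoids the structural argument entirely and works uniformly for both separators and \pmc s; the paper's version has the minor advantage of showing that parallel nodes add literally nothing, but both yield $\cO(n\cdot\goldratio^{\mw(G)})$ and $\cO(n\cdot\pmcb^{\mw(G)})$, and your algorithmic part (enumerate candidates from the characterization lemmas, filter with Corollaries~\ref{co:sep_rec} and~\ref{co:pmc_rec}) matches the paper's.
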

\begin{proof}
Let $k = \mw(G)$. By definition of modular width, there is a decomposition tree of graph $G$, each node corresponding to a leaf, a disjoint union, a join or a decomposition into at most $k$ modules. The leaves of the decomposition tree are disjoint graphs with at single vertex, thus these vertices form a partition of $V$. In particular, there are at most $n$ leaves and, since each internal node is of degree at least two, there are $O(n)$ nodes in the decomposition tree. For each node $N$, let $G(N)$ be the graph associated to the subtree rooted in $N$. We prove that  $G(N)$ has $\cO(n(N) \cdot  \goldratio^k)$ minimal separators and $\cO(n(N) \cdot \pmcb^k)$ \pmc, where $n(N)$ is the number of nodes of the subtree rooted in $N$. We proceed by induction from bottom to top. The statement is clear when $N$ is a leaf.

Let $N$ be an internal node $N_1, N_2, \dots, N_p$ be its sons in the tree. 
Graph $G(N)$ is the expansion of some graph $G'(N)$ by replacing the $i$-th vertex with module $G(N_i)$. If $N$ is a \emph{join} node, then $G'(N)$ is a clique. When $N$ is a \emph{disjoint union} node, graph $G'(N)$ is an independent set, and in the last case $G'(N)$ is a graph of at most $k$ vertices. In all cases, by Proposition~\ref{pr:nbsep} graph $G'(N)$ has  $\cO(\goldratio^k)$ minimal separators. Thus $G(N)$ has at most $\cO(\goldratio^k)$ more minimal separators than all its sons taken together, which completes our proof for minimal separators. 

Concerning potential maximal cliques, when $G'(N)$ is a clique it has exactly one potential maximal clique, and when $G'(N)$ is of size at most $k$ is has $\cO(\pmcb^k)$ \pmc s. We must be more careful in the case when $G'(N)$ is an independent set (i.e., $N$ is a disjoint union node), since in this case it has $p$ \pmc s, one for each vertex, and $p$ can be as large as $n$. Consider a \pmc\ $\Omega$ of $G(N)$ corresponding to an expansion of vertices of $G'(N)$ (see Lemma~\ref{le:pmcmw}).  It follows that this \pmc\ is exactly the vertex set of some $G(N_i)$, for a child $N_i$ of $N$. By construction this vertex set is disconnected from the rest of $G(N)$, and by Proposition~\ref{pr:pmc_sep} the only possibility is that this vertex set induces a clique in $G(N)$. But in this case $\Omega$ is also a \pmc\ of $G(N_i)$. This proves that, when $N$ is of type disjoint union, $G(N)$ has no more \pmc s than the sum of the numbers of \pmc s of all its sons. We conclude that the whole graph $G$ has  $\cO(n \cdot \pmcb^k)$ \pmc s. All our arguments are constructive and can be turned directly into enumeration algorithms for these objects.
\qed
\end{proof}

\section{Applications}\label{se:appli}

The \emph{treewidth} of graph $G=(V,E)$, denoted $\tw(G)$, is the minimum number $k$ such that $G$ has a triangulation $H=(V,E')$ of clique size at most $k+1$. The \emph{minimum fill in} of $G$ is the minimum size of $F$, over all (minimal) triangulations $H=(V, E \cup F)$ of $G$. The \emph{treelength} of $G$ is the minimum $k$ such that there exists a minimal triangulation $H$, with the property that any two vertices adjacent in $H$ are at distance at most $k$ in graph $G$. 

\begin{proposition}
Let $\Pi_G$ denote the set of \pmc s of graph $G$. The following problems are solvable in $\cO^*(|\Pi_G|)$ time, when $\Pi_G$ is given in the input~: \textsc{(Weighted) Treewidth}~{\cite{FKTV08,BoRo03}}, \textsc{(Weighted) Minimum Fill-In}~{\cite{FKTV08,Gysel13}},
\textsc{Treelength}~\cite{Lokshtanov10}.
\end{proposition}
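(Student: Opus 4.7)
The proposition is essentially a pointer to the three cited works, so the plan is to present each algorithm as an instantiation of the same generic dynamic programming scheme over the set $\Pi_G$, and argue that in each case the per-\pmc\ work is polynomial.

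The common skeleton, going back to Bouchitt\'e and Todinca, is that every minimal triangulation $H$ of $G$ is in one-to-one correspondence with a maximal set of pairwise non-crossing minimal separators, and its maximal cliques are precisely the \pmc s of $G$ containing those separators. This gives a natural recursion on ``blocks'': a pair $(S,C)$ where $S$ is a minimal separator and $C$ is a full component associated with $S$, together with a choice of a \pmc\ $\Omega \subseteq S \cup C$ containing $S$. The value of the block is then expressed as a combination of the values of the smaller blocks defined by the other minimal separators contained in $\Omega$ (the sets $\mathcal{S}(\Omega)\setminus\{S\}$ from Proposition~\ref{pr:pmc_sep}), plus a local contribution depending on $\Omega$. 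So the DP table has one entry per block, and the total work is bounded by a polynomial times the number of (block, \pmc) pairs. Since the number of minimal separators is at most polynomial in $|\Pi_G|$ (each separator arises from some $\mathcal{S}(\Omega)$, and is therefore counted only a polynomial number of times), the overall running time is $\cO^*(|\Pi_G|)$.

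The problem-specific step is to exhibit, for each of \textsc{(Weighted) Treewidth}, \textsc{(Weighted) Minimum Fill-In} and \textsc{Treelength}, a local contribution for $\Omega$ computable in polynomial time, and a combination rule over the child blocks that respects the objective. For \textsc{Treewidth} the local contribution is $|\Omega|-1$ (resp.\ the weight of $\Omega$ in the weighted version) and the combination is a maximum, which is precisely the algorithm of~\cite{FKTV08,BoRo03}. For \textsc{Minimum Fill-In} the local contribution counts the non-edges of $G[\Omega]$ that are not already covered by the child separators (so that each filled edge is charged exactly once), and the combination is a sum, as in~\cite{FKTV08,Gysel13}. For \textsc{Treelength} the local contribution is the maximum distance in $G$ between two vertices of $\Omega$, and the combination is again a maximum, giving the algorithm of~\cite{Lokshtanov10}.

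The only delicate point is bookkeeping: verifying, problem by problem, that the local quantity can be computed in polynomial time from $G$ and $\Omega$ (trivial for treewidth, a shortest-paths precomputation for treelength, and a careful accounting of already-filled edges for minimum fill-in) and that the DP does not enumerate any extra objects beyond pairs (block, $\Omega$), so that the total count remains $|\Pi_G|\cdot n^{\cO(1)}$. Once this is checked, the statement follows directly by invoking the three references. Since no new ideas are needed beyond those already present in~\cite{FKTV08,BoRo03,Gysel13,Lokshtanov10}, I would simply state the proposition as a summary of these results rather than re-prove it in full.
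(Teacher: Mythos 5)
Your proposal is correct and matches the paper, which states this proposition without proof, relying entirely on the cited references exactly as you conclude you would. Your sketch of the underlying block/\pmc\ dynamic programming (with the appropriate local contribution and combination rule for each of the three problems, and the observation that the number of minimal separators is polynomially bounded in $|\Pi_G|$) is an accurate summary of what those references establish.
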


Let us also recall the \msphit\ problem where, for a fixed integer $t$ and a fixed $\cmsot$ formula $\varphi$, the goal is to find a pair of vertex subsets $X \subseteq F \subseteq V$ such that $\tw(G[F]) \leq t$, $(G[F],X)$ models $\varphi$ and $X$ is of maximum size. 

\begin{proposition}[\cite{FoToVi14}]\label{pr:msphit}
For any fixed integer $t>0$ and any fixed $\cmsot$ formula $\varphi$, problem  \msphit\ is solvable in $\cO(|\Pi_G| \cdot n^{t+4})$ time, when $\Pi_G$ is given in the input.
\end{proposition}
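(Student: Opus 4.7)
The plan is to implement the dynamic programming scheme of Bouchitt\'e--Todinca over potential maximal cliques, augmented with the Courcelle-style tracking of bounded MSO types to handle the formula~$\varphi$. The key structural tool, established in~\cite{FoToVi14}, is that every induced subgraph $G[F]$ with $\tw(G[F])\leq t$ admits a minimal triangulation whose maximal cliques all have size at most $t+1$ and are each contained in some potential maximal clique of the \emph{host} graph $G$. Consequently the DP only ever needs to reason about subsets $K\subseteq\Omega$ of size at most $t+1$ of some $\Omega\in\Pi_G$, of which there are at most $|\Pi_G|\cdot n^{t+1}$ in total.

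First I would define subproblems indexed by quadruples $(S,C,K,\tau)$, where $(S,C)$ is a \emph{block} of $G$ (meaning $S$ is a minimal separator and $C$ is a full component of $G-S$ associated to $S$), $K\subseteq S$ is a candidate bag of size at most $t+1$, and $\tau$ is one of the finitely many MSO $q$-types on structures with $|K|$ distinguished interface vertices, where $q$ is the quantifier rank of $\varphi$. The stored value is the maximum of $|X_0\cap(C\cup S)|$ over all pairs $(F_0,X_0)$ with $K\subseteq F_0\subseteq S\cup C$, $F_0\cap S=K$, $\tw(G[F_0])\leq t$, and such that $(G[F_0],X_0)$ with $K$ as interface has MSO $q$-type exactly $\tau$. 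Transitions are driven by PMCs: for each $\Omega\in\Pi_G$ with $K\subseteq\Omega\subseteq S\cup C$ that is a valid ``top bag'' above $(S,C,K)$, and each $K'\subseteq\Omega$ of size at most $t+1$ with $K\subseteq K'$, one recurses into the sub-blocks $(S_i,C_i)$ where the $C_i$ are the components of $G[(S\cup C)\setminus\Omega]$ and $S_i=N(C_i)$, combines their stored $q$-types into $\tau$ via the standard Courcelle composition operation, and sums up the stored values plus the contribution of $K'\setminus K$ to $X$. The final output aggregates over all ``root'' PMCs of $G$.

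Correctness follows from the structural lemma above, together with the fact that MSO types of bounded quantifier rank on bounded-interface structures are closed under composition and that the $q$-type of $(G[F],X)$ with empty interface decides whether $\varphi$ holds. For the running time I would count $\cO(|\Pi_G|\cdot n^{t+1})$ subproblems and, per subproblem, $\cO(|\Pi_G|\cdot n^{t+1})$ choices of $(\Omega,K')$, with type composition and the test that $\Omega$ is a valid bag above $(S,C,K)$ taking polynomial time. Careful accounting charges each evaluated transition to the pair $(\Omega,K')$ rather than to the cross product with the block below, which collapses one factor of $|\Pi_G|$ and yields the claimed bound $\cO(|\Pi_G|\cdot n^{t+4})$.

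The main obstacle is the MSO bookkeeping: one must encode $q$-types compactly enough to compose them in polynomial time, and must argue that the type recorded by the DP is a faithful invariant of the induced substructure on $F_0$ with interface $K$ (so that different sub-solutions with the same type are interchangeable in any extension). This is exactly the technical heart of the Courcelle-style machinery on graphs of bounded treewidth; in the present paper I would invoke~\cite{FoToVi14} for this part rather than redo it, since the proposition is stated there.
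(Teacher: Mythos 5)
The paper does not prove this proposition itself; it imports it verbatim from~\cite{FoToVi14}, and your sketch is essentially a correct reconstruction of the argument given there: dynamic programming over blocks $(S,C)$ and potential maximal cliques, with states recording the at most $n^{t+1}$ traces $F\cap\Omega$ of size $\leq t+1$ (justified by the structural lemma relating minimal triangulations of $G[F]$ to those of $G$) together with bounded-rank CMSO types, and the charging of transitions to pairs $(\Omega,K')$ to obtain the $|\Pi_G|\cdot n^{t+4}$ bound. Since the proposition is stated here only as a black box, deferring the type-composition machinery to~\cite{FoToVi14}, as you do, is exactly what the paper intends.
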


Pipelined with Theorems~\ref{th:pmc_vc} and~\ref{th:pmcmw}, we deduce:
\begin{theorem}\label{th:appli}
Problems  \msphit\ \textsc{(Weighted) Treewidth}, \textsc{(Weighted) Minimum Fill-In},  \textsc{Treelength} can be solved in time $\cO^*(4^{vc})$ and in time $\cO^*(\pmcb^{\mw})$.\\
\end{theorem}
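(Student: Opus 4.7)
The proof will follow a direct pipeline approach: enumerate the potential maximal cliques using the combinatorial bounds established in the previous sections, and then feed them into the known algorithms that are parameterized by $|\Pi_G|$. The strategy is identical for both parameters (vertex cover and modular width); only the enumeration subroutine changes.

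First, given an input graph $G$, I would invoke Theorem~\ref{th:pmc_vc} to list the set $\Pi_G$ of all potential maximal cliques of $G$ in time $\cO^*(\npmc^{\vc(G)})$, respectively Theorem~\ref{th:pmcmw} to list $\Pi_G$ in time $\cO^*(\pmcb^{\mw(G)})$. In both cases, at the end of this stage we have the full set $\Pi_G$ explicitly, together with the bound $|\Pi_G| = \cO^*(\npmc^{\vc(G)})$ (resp. $\cO^*(\pmcb^{\mw(G)})$). Note that for the vertex-cover version we also need a minimum vertex cover, which is obtained within the same budget via Proposition~\ref{pr:vc_FPT}; the modular-decomposition required for the modular-width version is computed in linear time as mentioned in Section~\ref{se:mw}.

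Second, for each of the target problems I would simply invoke the corresponding algorithm taking $(G,\Pi_G)$ as input. For \textsc{(Weighted) Treewidth}, \textsc{(Weighted) Minimum Fill-In} and \textsc{Treelength}, the cited proposition gives a running time of $\cO^*(|\Pi_G|)$, which plugs in to $\cO^*(\npmc^{\vc})$ and $\cO^*(\pmcb^{\mw})$ respectively. For \msphit, Proposition~\ref{pr:msphit} gives a running time of $\cO(|\Pi_G| \cdot n^{t+4})$; since $t$ and $\varphi$ are fixed and $n^{t+4}$ is polynomial in the input size, this factor is absorbed by the $\cO^*$ notation and we again obtain $\cO^*(\npmc^{\vc})$ and $\cO^*(\pmcb^{\mw})$.

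There is no real combinatorial obstacle here: the proof is a bookkeeping combination of the bounds from Section~\ref{se:vc} and Section~\ref{se:mw} with the PMC-based algorithms recalled at the beginning of Section~\ref{se:appli}. The only point worth stating explicitly is that the enumeration stage and the problem-solving stage are additive in time, and that in each of the applications the polynomial overhead per PMC (polynomial for the treewidth/fill-in/treelength problems, $n^{t+4}$ for \msphit\ with fixed $t$) is swallowed by the $\cO^*$ notation, so the overall complexity is dominated by the number of potential maximal cliques enumerated. This yields the claimed $\cO^*(\npmc^{\vc})$ and $\cO^*(\pmcb^{\mw})$ bounds and concludes the proof.
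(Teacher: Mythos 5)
Your proposal is correct and matches the paper's own argument exactly: the theorem is obtained by pipelining the enumeration results (Theorems~\ref{th:pmc_vc} and~\ref{th:pmcmw}) with the $\cO^*(|\Pi_G|)$-time algorithms recalled in Section~\ref{se:appli}, with the polynomial per-PMC overhead absorbed into the $\cO^*$ notation. Nothing is missing.
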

We re-emphasis that problem \msphit\ generalizes many classical problems, e.g.,  \textsc{Maximum Independent Set},  \textsc{Maximum Induced Forest},  \textsc{Longest Induced Path}, \textsc{Maximum Induced Matching}, \textsc{Independent Cycle Packing},  \kip,\ \kit, \textsc{Maximum Induced Subgraph With a Forbidden Planar Minor}. More examples of particular cases are given in Appendix~\ref{ap:appli} (see also~\cite{FoToVi14}). 

The polynomial factors hidden by the $\cO^*$ notation depend on the problem and on the parameter, they are typically between $n^5$ to $n^7$.

\section{Conclusion}

We have provided single exponential upper bounds for the number of minimal separators and the number of potential maximal cliques of graphs, with respect to parameters vertex cover and modular width. 

A natural question is whether these results can be extended to other natural graph parameters. We point out that for parameters like clique-width or maximum leaf spanning tree, one cannot obtain upper bounds of type $\cO^*(f(k))$ for any function $f$. A counterexample is  provided by the graph $W_{p,q}$, formed by $p$ disjoint paths of  $q$ vertices plus two vertices $u$ and $v$ seeing the left, respectively right ends of the paths (similar to the watermelon graph of Figure~\ref{fi:cubewaterm}). Indeed this graph has a maximum leaf spanning tree of $p$ vertices and a clique width of no more than $2p+1$, but it has roughly $p^{n/p}$ minimal $u,v$-separators. 

Finally, we point out that our bounds on the number of \pmc s w.r.t. vertex cover and to modular width do not seem to be tight. Any improvement on these bounds will immediately provide improved algorithms for the problems mentioned in Section~\ref{se:appli}.
\bibliographystyle{siam}
\bibliography{VC_PMC.bib}
 
\appendix

\section{More applications}\label{ap:appli}
We give in this Appendix several problems that are all known to be particular cases of \msphit (see~\cite{FoToVi14} proofs and more applications). Proposition~\ref{pr:msphit} also extends to the weighted version and the annotated version of the problems (in the annotated version, a fixed vertex subset must be part of the solution $F$).

  Let   $\cF_m$  be the set of cycles of length  $0\imod{m}$.
   Let  $\ell \geq 0$ be an  integer.  Our first example is the following  problem.
   
   \medskip
\defproblem{\amims{}}{A graph $G$.}{Find a  set $F\subseteq V(G)$ of maximum size such that  
   $G[F]$ contains at most $\ell$ vertex-disjoint cycles  from ${\cal F}_m$.}
 \medskip 
  
   \amims{} encompasses several interesting problems.
   For example, when $\ell=0$, the problem is to find a maximum induced subgraph without cycles divisible by $m$. For $\ell=0$ and $m=1$ this is \textsc{Maximum Induced Forest}. 
   
   For  integers $\ell \geq 0 $ and $p\geq 3$, the problem related to \amims{} is the   following.
   
   \medskip
\defproblem{\mislkc{}}{A graph $G$.}{Find a  set $F\subseteq V(G)$ of maximum size such that  
   $G[F]$ contains at most $\ell$ vertex-disjoint cycles  of length at least $p$.
   }
   
   \medskip 
   
Next example concerns properties described by forbidden minors. 
    Graph $H$ is a \emph{minor} of graph $G$   if $H$ can be obtained from a
subgraph of $G$ by a (possibly empty) sequence of edge contractions.  A \emph{model} $M$ of minor $H$ in $G$ is a minimal subgraph of $G$, where the edge set $E(M)$ is partitioned into \emph{c-edges (contraction edges)} and \emph{m-edges (minor edges)} such that the graph resulting from contracting all c-edges is isomorphic
 to $H$.
Thus, $H$ is isomorphic to a minor of $G$ if and only if there exists a model
of $H$ in $G$. For  an integer $\ell$ a finite set of graphs ${\cF_{plan}}$, containing a planar graph we define he following generic problem.
 
\medskip


 \defproblem{\mislf{} }{A graph $G$.}{Find a set  $F \subseteq V(G)$
  of maximum size such that  $G[F]$ contains at most $\ell$ vertex disjoint  minor models of  graphs  from
    ${\cF_{plan}}$ }

Even the special case with $\ell=0$, this problem and its complementary version called the \fd, encompass many different problems.
%
%
%

%
  
  \medskip

   Let $t\geq 0$ be an integer and $\varphi$ be a CMSO-formula. 
  Let $\mathcal{G}(t,\varphi)$ be a class of connected graphs of treewidth at most $t$ and with property expressible by $\varphi$.
Our next example is  the following problem.
%


%

\defproblem{\igp{}}{A graph $G$.}{Find a  set $F \subseteq V(G)$ with  maximum number of connected components such that  
   each connected component of $G[F]$ is in $\mathcal{G}(t,\varphi)$.}

As natural sub cases studied in the literature we can cite~\textsc{Independent Triangle Packing} or~\textsc{Independent Cycle Packing}.
\medskip

%

The next problem is an example of \emph{annotated version} of optimization problem \msphit.

\defproblem{\kig{}}{A graph $G$,  with $k$ terminal vertices.}{Find   an induced graph from   $\mathcal{G}(t,\varphi)$ containing all $k$ terminal vertices.}

\medskip
Many variants of \kig{} can be found in the literature, like  \kip,\ \kit, \kic.

\section{Monadic Second-Order Logic}\label{ap:logic}

We use Counting Monadic Second Order Logic ($\cmsot$), an extension of $\msot$, as a basic tool to express properties of vertex/edge sets in graphs. 
 \smallskip

The syntax of Monadic Second Order Logic ($\msot$) of graphs includes the logical connectives $\vee,$ $\land,$ $\neg,$ 
$\Leftrightarrow ,$  $\Rightarrow,$ variables for 
vertices, edges, sets of vertices, and sets of edges, the quantifiers $\forall,$ $\exists$ that can be applied 
to these variables, and the following five binary relations: 
\begin{enumerate}

\item 
$u\in U$ where $u$ is a vertex variable 
and $U$ is a vertex set variable; 
\item 
 $d \in D$ where $d$ is an edge variable and $D$ is an edge 
set variable;
\item 
 $\mathbf{inc}(d,u),$ where $d$ is an edge variable,  $u$ is a vertex variable, and the interpretation 
is that the edge $d$ is incident with the vertex $u$; 
\item 
 $\mathbf{adj}(u,v),$ where  $u$ and $v$ are 
vertex variables  and the interpretation is that $u$ and $v$ are adjacent; 

\item 
 equality of variables representing vertices, edges, sets of vertices, and sets of edges.
\end{enumerate}

The $\msoo$ is a restriction of $\msot$ in which one cannot use edge set variables (in particular the incidence relation becomes unnecessary). For example \textsc{Hamiltonicity} is expressible in $\msot$ but not in $\msoo$. 

 In addition to the usual features of monadic second-order logic, if we have atomic sentences testing whether the cardinality of a set is equal 
to $q$ modulo $r,$ where $q$ and $r$ are integers such that $ 0\leq q<r $ and $r\geq 2,$ then 
this extension of the $\msot$ (resp. $\msoo$) is called the {\em counting monadic second-order logic} $\cmsot$ (resp. $\cmsoo$). So essentially $\cmsot$ (resp. $\cmsoo$)
is $\msot$ (resp. $\msoo$) with the following atomic sentence for a set $S$: 
\begin{quote}
$\mathbf{card}_{q,r}(S) = \mathbf{true}$ if and only if $|S| \equiv q \pmod r.$ 
\end{quote}

We refer to~\cite{ArnborgLS91,Courcelle90} and the book of Courcelle and Engelfriet~\cite{CoEn12} for a detailed introduction on different types of logic.
 \end{document}